\begin{document}
\title{Exact recovery of planted cliques in semi-random graphs}
%
%
\author{Yash Khanna}
\authorrunning{Yash}
%
\institute{Indian Institute of Science, Bangalore, India\footnote{This work was done while the author was a student at IISc.}\\
\email{ykhannay@gmail.com}}
\maketitle              
%
\newcommand{\inprod}[1]{\left\langle #1\right\rangle}
\newcommand{\paren}[1]{\left(#1 \right )}
\newcommand{\defeq}{\stackrel{\textup{def}}{=}}
\newcommand{\clique}{{\sf~Clique}}
\newcommand{\kclique}{{\sc k-clique} problem}
\newcommand{\cliqueparams}{\clique$(n, k, p, r, s, t, d, w, \gamma, \lambda)$}
\newcommand{\cliqueprob}{{\sc Planted Clique} problem}
\newcommand{\dksprob}{{\sc Densest $k$-subgraph} problem}
\newcommand{\dks}{{\sc D$k$S}}
\newcommand{\dksregparams}{\dks{\sf ExpReg}$(n, k, d, \delta, d', \lambda)$}
\newcommand{\dkssregparams}{\dks{\sf Reg}$(n, k, d, \delta, \gamma)$}
\newcommand{\set}[1]{\left\{#1\right\}}
\newcommand{\brac}[1]{[#1 ]}
\newcommand{\abs}[1]{\left\lvert#1\right\rvert}
\newcommand{\Abs}[1]{\left\lvert#1\right\rvert}
\newcommand{\Norm}[1]{\left\lVert#1\right\rVert}
\newcommand{\norm}[1]{\left\lVert#1\right\rVert}
\newcommand{\normt}[1]{\norm{#1}_{\scriptstyle 2}}
\newcommand{\sdp}{{\sf SDP }}
\newcommand{\bigO}{\mathcal{O}}
\newcommand{\flatfrac}[2]{#1/#2}
\newcommand{\ffrac}{\flatfrac}
\newcommand{\etal}{et al. }
\newtheorem{SDP}[theorem]{SDP}
\newtheorem{LP}[theorem]{LP}
\newcommand{\probabilityconstant}{\kappa}
\newcommand{\matrixconstant}{\xi}
\newcommand{\OPT}{{\sf OPT}}
\newcommand{\notimplies}{\;\not\!\!\!\implies}
\begin{abstract}
In this paper, we study the \cliqueprob~in a semi-random model. Our model is inspired from the Feige-Kilian model \cite{FEIGE2001639} which has been studied in many other works \cite{10756085,10.1145/3564246.3585184,article,10.1007/978-3-030-83508-8_38,10.5555/3381089.3381134,DBLP:journals/eccc/Steinhardt17} for a variety of graph problems. Our algorithm and analysis is on similar lines to the one studied for the \dksprob~in the work of Khanna and Louis \cite{khanna_et_al:LIPIcs.FSTTCS.2020.27}. 
	
As a by-product of our main result, we give an alternate SDP-based rounding algorithm (with similar guarantees) for solving the \cliqueprob~in a random graph.

\keywords{Planted cliques   \and Semi-random models \and Beyond worst-case analysis}
\end{abstract}
\section{Introduction}
\label{sec:intro}
Given an undirected graph, the decision problem of checking whether it contains a $k$-clique, i.e., a subgraph of size $k$ which contains all the possible edges is a famous NP-hard problem and appears in the list of 21 NP-complete problems in the early work of Karp \cite{DBLP:conf/coco/Karp72}. The best known approximation algorithm by the work of Boppana and Halld\'{o}rsson \cite{10.1007/3-540-52846-6_74} has an approximation factor of $\bigO\paren{\ffrac{n}{\paren{\log n}^2}}$. The results by H{\aa}stad and Zuckerman \cite{548522,10.1145/1132516.1132612} shows that no polynomial time algorithm can approximate this to a factor better than $n^{1-\epsilon}$ for every $\epsilon > 0$, unless $P = NP$. This was improved by Khot \etal \cite{10.1007/11786986_21}, who showed that there is no algorithm which approximates the maximum clique problem (in the general case) to a factor better than $n/{2^{\paren{\log n}}}^{3/4+\epsilon}$ for any constant $\epsilon > 0$ assuming $NP \subsetneq BPTIME\paren{2^{\paren{\log n}^{\bigO(1)}}}$.

These results led to studying this problem in the average-case, i.e., we plant a clique of size $k$ in a Erd\H{o}s-R\'enyi random graph $(G(n, p))$, and study the ranges of parameters of $k$ and $p$ for which this problem can be solved. We give a brief survey in Section \ref{sec:related_work}.

Another direction is to consider the problem in a restricted family of graphs or ``easier'' instances. This allows us to design new and interesting algorithms with much better guarantees (as compared to the worst-case models) and might possibly help us get away from the adversarial examples which cause the problem to be hard in the first place. This way of studying hard problems falls under the area of ``Beyond worst-case analysis''. We take this approach and in this work, we study the \cliqueprob~in a semi-random model. This is a model generated in multiple stages via a combination of adversarial and random steps. Such generative models have been studied in the early works of \cite{10.1006/jagm.1995.1034,10.1016/j.jalgor.2004.07.003,FEIGE2001639,article} in the context of algorithms. We refer the reader to \cite{khanna_et_al:LIPIcs.FSTTCS.2020.27} and the references therein for a survey of variety of graph problems which have been subjected to such a study.

We start by establishing some notation used throughout the paper. 

\subsection{Notation (from \cite{khanna_et_al:LIPIcs.FSTTCS.2020.27})}
\label{sec:notation}
Let $\bar{A}$ denote the adjacency matrix of our input graph $\mathcal{G} = (\mathcal{V}, \mathcal{E})$ whose construction is defined in Section \ref{sec:model}. We use $n \defeq \abs{\mathcal{V}}$, and use $\mathcal{V}$ and $[n] \defeq \set{1,2,\hdots,n}$ interchangeably. 
We assume, w.l.o.g., that $\mathcal{G}$ is a complete graph: if $\set{i,j} \notin \mathcal{E}$, we add $\set{i,j}$
to $\mathcal{E}$ and set $\bar{A}_{ij} = \bar{A}_{ji} = 0$.

For $\mathcal{V'} \subseteq \mathcal{V}$, we use $\mathcal{G}\brac{\mathcal{V'}}$ to denote the subgraph induced on $\mathcal{V'}$.
For a vector $v$, we use $\norm{v}$ to denote $\normt{v}$.
For a matrix $M$, we use $\norm{M}$ to denote the spectral norm, $\norm{M} \defeq \max\limits_{x \neq 0} \dfrac{\norm{Mx}}{\norm{x}}$.

We define probability distributions $\mu$ over finite sets $\Omega$. For a random variable (r.v.) $X : \Omega \to {\rm I\!R}$, its expectation is denoted by ${\rm I\!E}_{x \sim \mu}\brac{X}$. In particular, we define the distribution which we use next. For a vertex set $\mathcal{V'}\subseteq \mathcal{V}$, we define a probability (uniform) distribution $(f_{\mathcal{V'}})$ on the vertex set $\mathcal{V'}$ as follows. For a vertex $i \in \mathcal{V'}$, $f_{\mathcal{V'}}(i) = \dfrac{1}{\abs{\mathcal{V'}}}$. We use $i \sim \mathcal{V'}$ to denote $i \sim f_{\mathcal{V'}}$ for clarity.

\begin{definition}[Restatement of Definition 1.10 from \cite{khanna_et_al:LIPIcs.FSTTCS.2020.27}]
    \label{def:expander}
    A graph $\mathcal{H} = (\mathcal{V}_{\mathcal{H}}, \mathcal{E}_{\mathcal{H}})$ is said to be a $(s, d,\lambda)$-expander if $\abs{\mathcal{V_{\mathcal{H}}}} = s$, $\mathcal{H}$ is $d$-regular, and $\Abs{\lambda_i} \leq \lambda$, $\forall i \in [s]\setminus \set{1}$, where $\lambda_1 \geq \lambda_2 \hdots \geq \lambda_{s}$ are the eigenvalues of the adjacency matrix of $\mathcal{H}$.
\end{definition}
\subsection{Model}
\label{sec:model}
In this section, we describe our semi-random model. We first describe it informally. We start with an empty graph on $n$ vertices and partition it arbitrarily into sets $\mathcal{S}$ and $\mathcal{V} \setminus \mathcal{S}$ of sizes $k$ and $n-k$ respectively. We plant a clique onto the subgraph induced on $\mathcal{S}$ (denoted by $\mathcal{G}\brac{\mathcal{S}}$). The bipartite subgraph $\mathcal{G}\brac{\mathcal{S} \times \mathcal{V} \setminus \mathcal{S}}$ is a random subgraph with parameter $p$, i.e., each edge is added independently with probability $p$. And finally the subgraph $\mathcal{G}\brac{\mathcal{V} \setminus \mathcal{S}}$ is composed of multiple small subgraphs each of which is far from containing a clique of size $k$, and these subgraphs are connected by random independent edges again with parameter $p$. There are three kinds of subgraphs in $\mathcal{G}\brac{\mathcal{V} \setminus \mathcal{S}}$, first we have $r$ disjoint $(s, d, \lambda)-$expander graphs (see Definition \ref{def:expander}) and, second we have $t$ disjoint subgraphs each having an average degree of at most $\gamma k$, and third we have a random graph of size $w$ and parameter $p$. A formal definition is presented below.
\begin{definition}
    \label{def:model}
    An instance of our input graph \\$\mathcal{G} = (\mathcal{V}, \mathcal{E})\sim$\cliqueparams~is generated as follows,
    \begin{enumerate}
        \item \label{step:one} 
            We divide the vertex set $\mathcal{V}~(\abs{\mathcal{V}} = n)$  into two sets, $\mathcal{S}$ and $\mathcal{V} \setminus \mathcal{S}$ with $\abs{\mathcal{S}} = k$. We further divide $\mathcal{V} \setminus \mathcal{S}$ into sets $\Lambda$, $\Pi$, $\Gamma$ such that
    	\begin{itemize}
        	\item The set $\Lambda$ is arbitrarily divided into disjoint subsets $\Lambda_1, \Lambda_2, \hdots, \Lambda_r$ such that for all $\ell \in [r],~\abs{\Lambda_\ell} = s$,
        	\item the set $\Pi$ is arbitrarily divided into disjoint subsets $\Pi_1, \Pi_2, \hdots, \Pi_t$ such that for all $\ell \in [t],~\abs{\Pi_\ell} > 0$, and
            \item the set $\Gamma$ is such that it has size $\abs{w} > 0$.
    	\end{itemize}
    \item \label{step:two} (\emph{Adding random edges}) We add edges between the following sets of pairs
	\begin{itemize}
            \item $\mathcal{S} \times \mathcal{V} \setminus \mathcal{S}$,
            \item $\Lambda_i \times \Lambda_j$ for $i, j \in [r], i \neq j$,
            \item $\Pi_i \times \Pi_j$ for $i, j \in [t], i \neq j$,
            \item $\Lambda_i \times \Pi_j$ for $i \in [r], j \in [t]$,
            \item $\Lambda_i \times \Gamma$ for $i \in [r]$,
            \item $\Gamma \times \Pi_j$ for $j \in [t]$
	\end{itemize}
	independently with probability $p$. The edges between pairs of vertices in $\Gamma$ are also added with probability $p$.
    \item \label{step:three} (\emph{Adding a clique on $\mathcal{S}$})
        We add edges between pairs of vertices in $\mathcal{S}$ such that the graph induced on $\mathcal{S}$ is a \emph{clique}. For the sake of brevity, we also add a self loop on each vertex of $\mathcal{V}$, this will make the arithmetic cleaner (like the average degree of $\mathcal{G}[\mathcal{S}]$ is now $k$ instead of $k-1$) and has no severe consequences.
    \item \label{step:five} (\emph{Adding edges in $\Lambda_i$'s})
	For each $i \in [r]$, we add edges between arbitrary pairs of vertices in $\Lambda_i$, such that the graph induced on $\Lambda_i$ is a $(s, d, \lambda)$-expander graph.
    \item \label{step:six} (\emph{Adding edges in $\Pi_i$'s})
	For each $i \in [t]$, we add edges between arbitrary pairs of vertices in $\Pi_i$, such that the graph induced on $\Pi_i$ has the following property,
	$ \max\limits_{\mathcal{V'} \subseteq \Pi_i}\left\{\dfrac{\sum\limits_{i, j \in \mathcal{V'}}\bar{A}_{ij}}{2\abs{\mathcal{V'}}}\right\} \leq \gamma k.$ Or, in other words, for each $i \in [t]$ and $\mathcal{V'} \subseteq \Pi_i$, the maximum average degree of the subgraph $\mathcal{G}\brac{\mathcal{V'}}$ is at most $\gamma k$ for some $\gamma \in (0, 1)$.
    \item \label{step:seven} (\emph{Monotone adversary step}) 
        Arbitrarily delete any of the edges added in Steps \ref{step:two}, \ref{step:five}, or \ref{step:six}.
    \item \label{step:eight}
	Output the resulting graph.
\end{enumerate}
\end{definition}
Note that in our model (Definition \ref{def:model}), the three kinds of subgraphs $\mathcal{G}[\Lambda_i], \mathcal{G}[\Pi_j]$, and $\mathcal{G}[\Gamma]$ which constitute ${\mathcal{G}}\brac{{\mathcal{V}} \setminus {\mathcal{S}}}$ have sparse induced subgraphs by definition (at least in the range of parameters where we study them). It is interesting to see that the first two of them are pairwise exclusive in the sense that a $\mathcal{G}[\Lambda_i]$ graph need not qualify to be $\mathcal{G}[\Pi_j]$ and vice versa. It is an easy exercise to show this.

In this paper, the problem which we study is as follows: Given a graph generated from the above described model, the goal is to recover the planted clique $(\mathcal{G}\brac{\mathcal{S}})$ with high probability. We show that for a ``large'' range of the input parameters, we can indeed solve this problem.

The key ingredient of our algorithm is the following semidefinite program (SDP \ref{sdp:dks}) which is a standard relaxation of the \kclique~(We define the \kclique~as the problem of finding a clique of size $k$, given a large graph as input), however we state it in full for completeness.
\begin{SDP}
\label{sdp:dks}
\begin{align}
    \max_{\set{\set{\bar{X_i}}_{i=1}^{n}, \bar{I}}}\qquad
    \label{eq:sdp1}
    \sum\limits_{i,j=1}^{n}  \bar{A}_{ij}\inprod{\bar{X_i}, \bar{X_j}}& \\
    \text{subject to}\qquad\qquad~
    \label{eq:sdp2}
    \sum\limits_{i=1}^{n} \inprod{\bar{X_i}, \bar{X_i}} &= k \\
    \label{eq:sdp4}
    \sum\limits_{j=1}^{n} \inprod{\bar{X_i}, \bar{X_j}} &\leq k\inprod{\bar{X_i}, \bar{X_i}} & \forall i \in [n]\\
    \label{eq:sdp3}
    \inprod{\bar{X_i}, \bar{X_j}} &= 0 &\forall \paren{i, j} \notin \mathcal{E}\\
    \label{eq:sdp5}
    0 \leq	\inprod{\bar{X_i}, \bar{X_j}} &\leq \inprod{\bar{X_i}, \bar{X_i}} & \forall i, j \in [n],\ (i \neq j)\\
    \label{eq:sdp6}
    \inprod{\bar{X_i}, \bar{X_i}} &\leq 1 & \forall i \in [n]\\
    \label{eq:sdp7}
    \inprod{\bar{X_i}, \bar{I}} &= \inprod{\bar{X_i}, \bar{X_i}} & \forall i \in [n]\\
    \label{eq:sdp8}
    \inprod{\bar{I}, \bar{I}} &= 1\\
    \label{eq:sdp9}
    \bar{X_i} &\in {\rm I\!R}^{n+1} & \forall i \in [n]\\
    \label{eq:sdp10}
    \bar{I} &\in {\rm I\!R}^{n+1}
\end{align}
\end{SDP}
The above SDP can be solved upto arbitrary precision in polynomial time using the Ellipsoid algorithm to fetch the solution set $\set{\set{X_i}_{i=1}^{n}, I}$. It is easy to see that since $\mathcal{G}\brac{\mathcal{S}}$ is a clique, the integral solution corresponding to $\mathcal{S}$ does satisfy the above constraints. We state it now,
\[
    \forall i\in [n], \bar{X_{i}} = \begin{cases}
    \hat{v} & i \in \mathcal{S} \\
    \hat{0} & i \in \mathcal{V} \setminus \mathcal{S}\\
    \end{cases}
    \qquad\text{and}\qquad \bar{I} = \hat{v}
\]
where $\hat{v}$ is any unit vector, $\hat{0}$ is the all zeroes vector, and this feasible solution gives an objective value of $k^2$.

Note that this semidefinite programming relaxation is quite similar to that of \dksprob~from \cite{khanna_et_al:LIPIcs.FSTTCS.2020.27} but we also add the following set of constraints to it,
\begin{align}
    \label{eq:extra_constraint}
    {\inprod{X_i, X_j} = 0 \quad \forall \paren{i, j} \notin \mathcal{E}}.
\end{align}
This is a key difference as compared to the \dksprob~and we will use the above set of constraints crucially in our analysis, much of which is inspired from \cite{khanna_et_al:LIPIcs.FSTTCS.2020.27}. We will describe this in more detail in Section \ref{sec:recovery} and Appendix \ref{sec:analysis}. 

\subsection{Main Result}
\label{sec:result}
We propose an algorithm which is based on rounding the above described SDP \ref{sdp:dks}. The algorithm and the analysis uses tools from the recent literature. Roughly speaking, the ranges of parameters where our algorithm works is when the subgraph $\mathcal{G}\brac{\mathcal{S}}$ is a clique while any other $k$-sized induced subgraph is ``far'' from containing a clique. An advantage of using SDP-based algorithms is that they are robust against a monotone adversary (Step \ref{step:seven} of the model construction). This is an important point because many of the algorithms based on spectral or combinatorial methods are not always robust and may not work effectively with the presence of such adversaries.

\begin{theorem}
    \label{thm:main}
    There exist universal constants $\kappa, \xi \in {\rm I\!R}^{+}$ and a deterministic polynomial time algorithm, which takes an instance of \\\cliqueparams~where \[\nu = \dfrac{36\xi^2(np)(r+t+2)}{k^2\paren{1-6p-2\gamma-\dfrac{d}{s}-\dfrac{\lambda}{k}}^2},\] satisfying $\nu \in (0,1)$, and $p \in [\kappa \ffrac{\log n}{n}, 1)$, and recovers the planted clique $\mathcal{S}$ with high probability (over the randomness of the input).
\end{theorem}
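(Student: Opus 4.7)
The plan is to run \prettyref{sdp:dks} on the input graph, extract from its optimal solution a candidate subset of $\mathcal{V}$ via the embedding vector $\bar{I}$, and argue that under the parameter regime specified by $\nu<1$ this set is exactly $\mathcal{S}$ with high probability. The feasibility of the ``planted'' solution $\bar X_i=\mathbbm{1}[i\in\mathcal{S}]\cdot \hat v$, $\bar I=\hat v$ already gives the SDP lower bound $k^2/2$, so the main content is an upper bound showing that essentially all of the SDP mass sits on $\mathcal{S}$.

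First I would set $\alpha_i\defeq\inprod{\bar X_i,\bar I}=\inprod{\bar X_i,\bar X_i}$ using constraint~\eqref{eq:sdp7}, so $\alpha_i\in[0,1]$, and interpret $\alpha_i$ as the ``fractional indicator'' of $i$ in the recovered set. The critical constraint~\eqref{eq:extra_constraint} forces $\inprod{\bar X_i,\bar X_j}=0$ whenever $\{i,j\}\notin E$, which is exactly the ingredient that lets us restrict bilinear sums to true edges of $\mathcal G$. Writing the SDP objective as a quadratic form $\tfrac12\sum_{ij}\bar A_{ij}\inprod{\bar X_i,\bar X_j}$ and splitting according to the partition $\mathcal{S}\cup\Lambda_1\cup\cdots\cup\Lambda_r\cup\Pi_1\cup\cdots\cup\Pi_t$, I would bound each block's contribution separately:
\begin{itemize}
\item Inside $\mathcal{S}$: trivially at most $k^2/2$ using \eqref{eq:sdp5} and \eqref{eq:sdp6}.
\item Inside each $\Lambda_\ell$: apply the $(s,d,\lambda)$-expander bound, writing $A_{\Lambda_\ell}=\tfrac{d}{s}J+E_\ell$ with $\|E_\ell\|\le\lambda$, to bound the quadratic form by $(d/s)(\sum_{i\in\Lambda_\ell}\alpha_i)^2+\lambda\sum_i\alpha_i^2$ plus a term controlled by constraint~\eqref{eq:sdp2}.
\item Inside each $\Pi_\ell$: apply the max-average-degree bound $\gamma k$, exactly as in \cite{khanna_et_al:LIPIcs:2020:13268}, against the non-negative ``masses'' $\alpha_i$.
\item Across the bipartite pieces $\mathcal{S}\times(\mathcal{V}\setminus\mathcal{S})$, $\Lambda_i\times\Lambda_j$, $\Lambda_i\times\Pi_j$, $\Pi_i\times\Pi_j$: the relevant $(0,1)$-matrices are $\Bern(p)$-random of the appropriate shape, so matrix Bernstein / Feige--Ofek type spectral concentration ($p\ge\kappa\log n/n$) yields with high probability an operator norm of order $\xi\sqrt{np}$, which upper bounds these cross-contributions by $\xi\sqrt{np}\cdot\|\alpha\|_2^2\le\xi\sqrt{np}\cdot k$ using the mass constraint~\eqref{eq:sdp2}.
\end{itemize}
Combining, the total SDP value is at most $\tfrac12\bigl(\sum_{i\in\mathcal S}\alpha_i\bigr)^{2}+\tfrac12\bigl(6p+2\gamma+d/s+\lambda/k\bigr)k\sum_{i\notin\mathcal S}\alpha_i+O(\xi\sqrt{np}(r{+}t{+}1))\cdot k$, where the factor $(r+t+1)$ tracks the number of distinct bipartite blocks over which the spectral bound is paid.

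Next I would compare this upper bound with the achievable value $k^2/2$. Setting $M\defeq \sum_{i\in\mathcal S}\alpha_i\le k$ and $m\defeq \sum_{i\notin\mathcal S}\alpha_i$, constraint~\eqref{eq:sdp2} gives $M+m\ge k-\inprod{\bar X_i,\bar X_i}$-type slack, and the upper bound becomes a quadratic in $(M,m)$ whose maximum at $m>0$ is strictly less than $k^2/2$ precisely when $\nu<1$; this is where the explicit form of $\nu$ is engineered to appear. Consequently the optimal SDP solution must place essentially all mass on $\mathcal{S}$: $\sum_{i\notin\mathcal S}\alpha_i$ is sub-constant, while $\alpha_i$ for $i\in\mathcal S$ is forced to be close to $1$ by the clique edges of $\mathcal{G}[\mathcal{S}]$ together with~\eqref{eq:sdp6}. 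Rounding by the rule ``output $i$ if $\alpha_i\ge 1/2$'' (or any threshold bounded away from $0$ and $1$) then returns exactly $\mathcal S$.

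Two remarks on obstacles. The genuinely delicate step is the spectral concentration across the many bipartite blocks: a naive union bound loses the factor $(r+t+1)$ that must be absorbed into $\nu$, so I would either use a single Bernstein bound on the full off-diagonal random part of $\bar A$ restricted to $\mathcal V\setminus\mathcal S$ together with the independent bipartite slab $\mathcal S\times(\mathcal V\setminus\mathcal S)$, or argue block-wise and carefully account for the logarithmic factors hidden in $\kappa$. The monotone adversary of \prettyref{step:seven} is handled for free: deleting non-$\mathcal{G}[\mathcal{S}]$ edges only removes non-negative terms from the SDP objective, so every upper bound above is preserved, while the planted solution remains feasible with value $k^2/2$; hence the entire argument is adversary-robust.
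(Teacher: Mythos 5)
Your decomposition of the SDP objective into blocks and the use of spectral concentration for the random bipartite pieces are aligned with the paper's strategy, but there are two genuine gaps.

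First, the block bounds are too coarse for the comparison to close. You bound the $\mathcal{G}[\mathcal{S}]$ contribution ``trivially at most $k^2/2$,'' but then the total upper bound starts at $k^2/2$ and comparing with the lower bound $k^2/2$ yields nothing. The paper instead proves the sharper bound $\sum_{i,j\in\mathcal{S}}A_{ij}\inprod{X_i,X_j}\le k^2\,\E_{i\sim\mathcal{S}}\|X_i\|^2$ (\prettyref{lem:one}), so the unknown quantity $\E_{i\sim\mathcal{S}}\|X_i\|^2$ appears in the inequality and can be isolated. Similarly, the cross-term spectral bound needs the factor $\sqrt{1-\E_{i\sim\mathcal{S}}\|X_i\|^2}$ (as in \prettyref{lem:five}), not just $k$, otherwise the error term never vanishes as the SDP mass concentrates on $\mathcal{S}$. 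Your sketch ``upper bound is a quadratic in $(M,m)$ whose max at $m>0$ is strictly less than $k^2/2$'' is the right intuition but cannot be carried out with the loose bounds as stated.

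Second, and more importantly, the rounding step is incorrect. Thresholding $\alpha_i=\|X_i\|^2$ at $1/2$ does not recover $\mathcal{S}$: \prettyref{prop:main_prop} only controls the \emph{average} of $\|X_i\|^2$ over $\mathcal{S}$, so many $i\in\mathcal{S}$ can have $\|X_i\|^2<1/2$ (you would lose them), and up to $\Theta(\psi k)$ vertices in $\mathcal{V}\setminus\mathcal{S}$ can have $\|X_i\|^2\ge 1/2$ (you would include them). Your claim that ``$\alpha_i$ for $i\in\mathcal{S}$ is forced to be close to $1$'' is not established by anything in the argument. The paper's recovery is different and crucial to exact recovery: it takes a threshold close to $1$, namely $\mathcal{T}=\{i:\|X_i\|^2\ge 1-\alpha\psi\}$ with $\alpha=1/(3\sqrt{\psi})$; uses the orthogonality constraints \prettyref{eq:extra_constraint} together with \prettyref{lem:apx2} to force $\inprod{X_i,X_j}>0$ for $i,j\in\mathcal{T}$, hence $\mathcal{T}$ is a clique; uses the randomness of the bipartite boundary $\mathcal{G}[\mathcal{S},\mathcal{V}\setminus\mathcal{S}]$ and a union bound to conclude $\mathcal{T}\subseteq\mathcal{S}$ w.h.p.; and then recovers $\mathcal{S}\setminus\mathcal{T}$ greedily by adding vertices adjacent to all of $\mathcal{T}$. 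You mention the orthogonality constraint only as an aid in restricting sums to edges, but not in the rounding --- yet it is precisely what makes exact recovery (rather than approximate recovery as in the D$k$S work) possible. As written, your proposal does not yield exact recovery of $\mathcal{S}$.
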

Note that our result does not depend on the size of the subgraphs $\mathcal{G}\brac{\Pi_\ell}$'s but only on their counts, i.e. parameter $t$. Even our model is not parameterized by the sizes of $\Pi_\ell$'s. In other words, all the $\Pi_\ell$'s can be of different sizes but as long as the average degree requirement of subgraphs $\mathcal{G}\brac{\Pi_\ell}$'s (the one stated in Step \ref{step:six}) is met, our result holds.

We see some interesting observations from Theorem \ref{thm:main}. Firstly, there are a few conditions for the algorithm to work,
\begin{enumerate}
	\item $p = \Omega\paren{\dfrac{\log n}{n}}$, or to be verbose, $p$ should be ``large''.
	\item The function $\nu$ (which is dependent on the input parameters) should lie in the range $(0, 1)$, or stated in other words, $\nu$ should be ``small''.
\end{enumerate}
 A setting of input parameters when the value of $\nu$ is ``small'' is as follows:
\[
k = \Omega\paren{\max\paren{\sqrt{np(r+t+2)}, \lambda}}, \gamma = \bigO(1), s=\Omega(d).
\]
\noindent
And also, $6p + 2\gamma + \frac{d}{s} + \frac{\lambda}{k} < 1$.
The above values of different input parameters suggest that the algorithm will work only when any subgraph of size $k$ will be far from a dense set (or a clique) inside $\mathcal{G}\brac{\mathcal{V} \setminus \mathcal{S}}$. Also since the subgraph $\mathcal{G}\brac{\mathcal{S} \times \mathcal{V} \setminus \mathcal{S}}$ is a random graph, it will not have dense sets either, thus naturally we can think of that the SDP \ref{sdp:dks} should put most of the its mass on the vertices of $\mathcal{S}$.
\subsection{Related Work}
\label{sec:related_work}
\paragraph*{Random models for the clique problem.}
For the Erd\H{o}s-R\'enyi~random graph: $G(n, 1/2)$, it is known that the largest clique has a size approximately $2\log_{2}n$ \cite{matula}. There are several poly-time algorithms which find a clique of size $\log_{2}n$, i.e., with an approximation factor roughly $1/2$ \cite{Grimmett_McDiarmid_1975}. It is a long standing open problem to give an algorithm which finds a clique of size $(1+\epsilon)\log_{2}n$ for any fixed $\epsilon > 0$. This conjecture has a few interesting cryptographic consequences as well \cite{10.5555/314613.315047}.

\paragraph*{Planted models for the clique problem.}
In the \cliqueprob, we plant a clique of size $k$ in $G(n, 1/2)$ and study the ranges of $k$ for which this problem can be solved. The work by \cite{KUCERA1995193} shows that if $k = \Omega(\sqrt{n \log n})$, then the planted clique essentially comprises of the vertices of the largest degree. Alon, Krivelevich, and Sudakov \cite{10.5555/314613.315014} give a spectral algorithm to find the clique when $k = \Omega(\sqrt{n})$. Feige and Krauthgamer \cite{article} gave a SDP-based algorithm (different than ours) based on the Lov\'asz~theta function that works for $k = \Omega(\sqrt{n})$ in the presence of a monotone adversary, which can remove the random edges but not the edges of the planted clique. There is also a nearly linear time algorithm which succeeds w.h.p. when $k \geq (1+\epsilon)\sqrt{n/e}$ for any $\epsilon > 0$ \cite{10.1007/s10208-014-9215-y}. When $k = o\paren{\sqrt{n}}$, the work by Barak \etal\cite{doi:10.1137/17M1138236} rules out the possibility for a sum of squares algorithm to work.

For $r = t = 0$, i.e., the case when there are no such $\Lambda_\ell$'s, and $\Pi_\ell$'s. The case when $\mathcal{G}\brac{\mathcal{V} \setminus \mathcal{S}}$ is nothing but a random graph on $n-k$ vertices and probability parameter $p$, the lower bound on $k$ translates to $\Omega(\sqrt{np})$. Thus in this case, our problem reduces to recovering the planted clique in a random graph and we get a similar threshold value of $k$ to the one already studied in literature \cite{article,khanna_et_al:LIPIcs.FSTTCS.2020.27}.

We compare our work with that of recent work on \cliqueprob~by B\l{}asiok \etal\cite{10756085}, and Buhai \etal\cite{10.1145/3564246.3585184} in the table below.
\begin{center}
\begin{tabular}{ |p{3cm}||p{3cm}|p{3cm}|p{3cm}|  }
 \hline
 \multicolumn{4}{|c|}{Comparison with recent work on the \cliqueprob~for $p = 1/2$.} \\
 \hline
  & This work &B\l{}asiok \etal\cite{10756085}&Buhai \etal\cite{10.1145/3564246.3585184}\\
 \hline
 1) Size of k   & \boldmath{$\Omega(\sqrt{n})$}    &$\Omega(\sqrt{n}\log^2 n)$&   $\Omega(n^{1/2+\epsilon})$\\
 \hline
 2) Structure of $\mathcal{V} \setminus \mathcal{S}$&  Union of disjoint sparse graphs  & \textbf{Arbitrary graph}   & \textbf{Arbitrary graph}\\
 \hline
 3) Monotone deletions &\textbf{Allowed} & Not Allowed&  \textbf{Allowed}\\
 \hline
 4) Recovery (w.h.p.)    &\textbf{Exact recovery of $\mathcal{S}$} & List-decoding&  List-decoding\\
 \hline
  5) Running Time    &\textbf{poly($n$)} & \textbf{poly($n$)}&  $n^{\bigO({1/\epsilon})}$\\
 \hline
\end{tabular}    
\end{center}

There are many applications of the \cliqueprob~(and its variants), here is a partial list of works which talk about this problem: \cite{10.1145/1941487.1941511,austrin2011inapproximabilitynpcompletevariantsnash,10.1214/13-AOS1127,bhaskara2024robustnessspectralalgorithmssemirandom,10.1007/978-3-031-93112-3_20,10.1023/A:1008374125234,Koiran2011OnTC,doi:10.1126/science.298.5594.824}.

\paragraph*{Semi-random models for related problems.}
The semi-random model studied in this paper is inspired from a combination of two works. First is the very generic Feige-Kilian model \cite{FEIGE2001639}. In this model, we plant an independent set on $\mathcal{S}$ ($\abs{\mathcal{S}} = k$), the subgraph $\mathcal{G}\brac{\mathcal{S} \times \mathcal{V} \setminus \mathcal{S}}$ is a random graph with parameter $p$, while the subgraph $G\brac{\mathcal{V} \setminus \mathcal{S}}$ can be an arbitrary graph. Then an adversary is allowed to add edges anywhere without disturbing the planted independent set. McKenzie, Mehta, and Trevisan \cite{10.5555/3381089.3381134} show that for $k=\Omega\paren{\ffrac{n^{2/3}}{p^{1/3}}}$, their algorithm finds a ``large'' independent set. And for the range $k=\Omega\paren{\ffrac{n^{2/3}}{p}}$ , their algorithm outputs a list of independent sets (this type of algorithms' output is called the list-decoding variant), one of which is $\mathcal{S}$ with high probability. In the hypergraph case, the work by Khanna \etal~\cite{10.1007/978-3-030-83508-8_38} generalises their results to $r$-uniform hypergraphs for an analogous family of instances. Restrictions of this model has also been studied in the works of \cite{10.1145/3055399.3055491,DBLP:journals/eccc/Steinhardt17}.

It is important to note that while the above model is a pretty generic model and also solves the semi-random model which we study in our paper, however there are some key differences. Firstly, ours is an exact deterministic algorithm based on the SDP relaxation of the \kclique~while they use a ``crude'' SDP (this idea was introduced in \cite{10.5555/3381089.3381134}) which is not a relaxation of the independent set (or the complementary clique problem). But both the SDPs ``clusters'' the vectors corresponding to the planted set. Secondly the algorithmic guarantee of the work by  \cite{10.1007/978-3-030-83508-8_38,10.5555/3381089.3381134} is of a different nature where they output a list of independent sets one of which is the planted set, as described above.

The second relevant model is studied by Khanna and Louis \cite{khanna_et_al:LIPIcs.FSTTCS.2020.27} for the \dksprob. They plant an arbitrary dense subgraph on $\mathcal{G}\brac{\mathcal{S}}$, the subgraph $\mathcal{G}\brac{\mathcal{S} \times \mathcal{V} \setminus \mathcal{S}}$ is a random subgraph, and the subgraph $\mathcal{G}\brac{\mathcal{V} \setminus \mathcal{S}}$ has a property (Step \ref{step:three} of model construction) like the one of $\Lambda_\ell$'s or $\Pi_\ell$'s of this paper. A monotone adversary can delete edges outside $\mathcal{G}\brac{\mathcal{S}}$. Our algorithm, model, and the analysis is inspired from their work. We study the problem in the case when $\mathcal{G}\brac{\mathcal{S}}$ is a clique on $k$ vertices instead of an arbitrary $d$-regular graph. We get a full recovery of the clique in this paper instead of a ``large'' recovery of the planted set, for a ``wide'' range of input parameters.

We now compare our results to the models of Khanna and Louis \cite{khanna_et_al:LIPIcs.FSTTCS.2020.27}.
\begin{itemize}        	
    \item
	Recall the model, \dkssregparams~introduced in their work. In this model, the subgraph $\mathcal{G}\brac{\mathcal{S}}$ is an arbitrary $d$-regular graph of size $k$, $\mathcal{G}\brac{\mathcal{S} \times \mathcal{V} \setminus \mathcal{S}}$ is a random graph with parameter $p$, and the subgraph $\mathcal{G}\brac{\mathcal{V} \setminus \mathcal{S}}$, has the following property,
    $\max\limits_{\mathcal{V'} \subseteq \mathcal{V} \setminus \mathcal{S}}\left\{\dfrac{\sum\limits_{i, j \in \mathcal{V'}}\bar{A}_{{ij}}}{2\abs{\mathcal{V'}}}\right\} \leq \gamma d.$
	Clearly, this is analogous to the case when we only have one such $\Pi_1$ comprising the whole of $\mathcal{G}\brac{\mathcal{V} \setminus \mathcal{S}}$ such that the maximum average degree of any subgraph of $\mathcal{G}\brac{\Pi_1}$ is at most $\gamma k$. Now when $r =  0$ and $t = 1$, our model reduces to the case when \dkssregparams~has a clique on $\mathcal{S}$ (instead of a $d$-regular subgraph). Note that this case can be solved using our algorithm efficiently and we can recover the planted clique, i.e., $\mathcal{S}$ w.h.p. This is a much stronger guarantee as compared to the one in \cite{khanna_et_al:LIPIcs.FSTTCS.2020.27} where they output a vertex set with a large intersection with the planted set (but not completely), with the same threshold on $k$, i.e., $k = \Omega\paren{\sqrt{np}}$.
    \item
	Similarly, in the model, \dksregparams~introduced in the work of \cite{khanna_et_al:LIPIcs.FSTTCS.2020.27}. In this model, the subgraph $\mathcal{G}\brac{\mathcal{S}}$ is an arbitrary $d$-regular graph of size $k$, $\mathcal{G}\brac{\mathcal{S} \times \mathcal{V} \setminus \mathcal{S}}$ is a random graph with parameter $p$, and the subgraph $\mathcal{G}\brac{\mathcal{V} \setminus \mathcal{S}}$, is a $(n-k, d', \lambda)$-expander graph. This is analogous to the case when we have only one such $\Lambda_1$ comprising the whole of $\mathcal{G}\brac{\mathcal{V} \setminus \mathcal{S}}$. Now when $t =  0$ and $r = 1$, our model reduces to the case when \dksregparams~has a clique on $\mathcal{S}$. And similar to the previous point, this case can also be solved using our algorithm efficiently and we can recover the planted clique, i.e., $\mathcal{S}$ w.h.p.
\end{itemize}

\begin{remark}
It is important to note that it has been pointed to us by anonymous reviewers that the partial recovery in the work of Khanna and Louis \cite{khanna_et_al:LIPIcs.FSTTCS.2020.27} as described above can be translated to the full recovery (in the clique case) easily by combining some results from Buhai~\etal\cite{10.1145/3564246.3585184}.
\end{remark}

The idea of using SDP-based algorithms for solving semi-random models of instances has been explored in multiple works for a variety of graph problems, some of which are \cite{10.1145/1806689.1806719,khanna_et_al:LIPIcs.FSTTCS.2020.27,10.1007/978-3-030-83508-8_38,6108205,louis_et_al:LIPIcs.ICALP.2018.101,louis_et_al:LIPIcs.FSTTCS.2019.23,10.1145/2591796.2591841,10.5555/3381089.3381134,10.1145/2746539.2746603}.

\subsection{Proof Idea}
Our algorithm is based on rounding a SDP solution. The basic idea is to show that the vectors corresponding to the planted set $\mathcal{S}$ are ``long''. In the integral solution we have exactly $k$ long vectors which correspond to the set $\mathcal{S}$, in our solution we show that the vertices corresponding to the ``long'' vectors form a subset of $\mathcal{S}$ (the planted clique). This is shown by bounding the contribution of the vectors towards the SDP mass from the rest of the graph (i.e., everything except $\mathcal{G}\brac{\mathcal{S}}$). The decomposable nature of the SDP objective (Equation \ref{eq:sdp1}) into multiple sums corresponding to different subgraphs is leveraged here. This allows us to exploit the geometry of vectors to recover a part of the planted clique. This is possible only because the subgraph $\mathcal{G}\brac{\mathcal{V}\setminus \mathcal{S}}$, which is a combination of expanders, low-degree graphs etc. and thus is a sparse graph by construction (Steps \ref{step:five} or \ref{step:six}) and the random bipartite subgraph $\mathcal{G}\brac{\mathcal{S} \times \mathcal{V} \setminus \mathcal{S}}$ (Step \ref{step:two} of the model construction) will not have any dense sets either. Thus qualitatively the SDP should put most of the mass on the vertices of $\mathcal{S}$. We study the range of input parameters when this happens. 

Once we have recovered a subset of $\mathcal{S}$, the rest of the vertices can be recovered using a greedy algorithm. Let $\mathcal{T}$ denote the set of long vectors obtained by rounding the SDP \ref{sdp:dks} such that $\mathcal{T} \subseteq \mathcal{S}$, the remaining vertices of $\mathcal{S}$ can be obtained by iterating over vertices in $\mathcal{V} \setminus \mathcal{T}$ and checking if it has an edge with all vertices of $\mathcal{T}$ and completing the clique this way. Note that for this to work, we crucially use the orthogonality constraints added for each non-edge pair (Equation \ref{eq:extra_constraint}) and this additional recovery step works only because the planted set is a \kclique~and not an arbitrary dense subgraph. The recovery procedure is explained in Section \ref{sec:recovery} of the paper.

\subsection{Action of Monotone Adversary}
A monotonicity argument can be used to ignore the action of the adversary as stated below.
\begin{lemma}
    \label{lem:adversary}
    In the upcoming discussion and analysis, w.l.o.g., we can ignore the adversarial action (Step \ref{step:seven} of the model construction) to have taken place.
\end{lemma}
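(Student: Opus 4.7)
The plan is to show that any optimal SDP solution on the adversarial graph $\bar{\mathcal{G}}$ is simultaneously a feasible, large-value SDP solution on the pre-adversarial graph $\mathcal{G}$, so that every subsequent structural bound derived on $\mathcal{G}$ automatically transfers. Let $\set{\set{X_i^*}_{i=1}^{n}, I^*}$ be any optimal solution of \prettyref{sdp:dks} on $\bar{\mathcal{G}}$ (with adjacency matrix $\bar{A}$). Because the monotone adversary in \prettyref{step:seven} can only \emph{delete} edges, the non-edge set of $\mathcal{G}$ is contained in the non-edge set of $\bar{\mathcal{G}}$; hence the orthogonality constraints \prettyref{eq:sdp3} on $\mathcal{G}$ are implied by those on $\bar{\mathcal{G}}$. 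Every other constraint of \prettyref{sdp:dks}, namely \prettyref{eq:sdp2}, \prettyref{eq:sdp4}, \prettyref{eq:sdp5}, \prettyref{eq:sdp6}, \prettyref{eq:sdp7}, and \prettyref{eq:sdp8}, is graph-independent. So $\set{\set{X_i^*}, I^*}$ is automatically feasible for the SDP on $\mathcal{G}$.

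Next I would lower bound the $\mathcal{G}$-objective of $\set{X_i^*}$. The planted integral solution ($X_i = \hat{v}$ for $i \in \mathcal{S}$, $X_i = 0$ otherwise, $\bar{I} = \hat{v}$) is feasible on $\bar{\mathcal{G}}$ because \prettyref{step:seven} preserves every edge of the planted clique $\mathcal{G}[\mathcal{S}]$, and it attains value $k^2/2$. Since $\set{X_i^*}$ is optimal on $\bar{\mathcal{G}}$, and since $A_{ij} \geq \bar{A}_{ij}$ entrywise with $\inprod{X_i^*, X_j^*} \geq 0$ by \prettyref{eq:sdp5}, evaluating the same vectors against $A$ yields
\[
\frac{1}{2}\sum_{i,j=1}^{n} A_{ij}\inprod{X_i^*, X_j^*} \;\geq\; \frac{1}{2}\sum_{i,j=1}^{n} \bar{A}_{ij}\inprod{X_i^*, X_j^*} \;\geq\; \frac{k^2}{2}.
\]
Thus $\set{\set{X_i^*}, I^*}$ is a feasible $\mathcal{G}$-SDP solution of value at least $k^2/2$.

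To conclude, the remaining bounds in \prettyref{sec:analysis} will take one of two forms: either they concern the bare inner products $\inprod{X_i^*, X_j^*}$ and restricted sums thereof, which are independent of the graph, or they concern weighted sums $\sum A_{ij}\inprod{X_i^*, X_j^*}$ over various sub-regions $\mathcal{G}[\Lambda_\ell]$, $\mathcal{G}[\Pi_\ell]$, $\mathcal{G}[\mathcal{S} \times \mathcal{V}\setminus\mathcal{S}]$, etc. Upper bounds of the second form transfer from $\mathcal{G}$ to $\bar{\mathcal{G}}$ via $\bar{A}_{ij} \leq A_{ij}$, while inside $\mathcal{G}[\mathcal{S}]$ one has the stronger equality $A_{ij} = \bar{A}_{ij}$ (the adversary is forbidden there), so lower bounds on the clique mass transfer as well. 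The only potential subtlety is the direction of feasibility transfer for \prettyref{eq:sdp3}, which I expect to be the sole point requiring care, and it has already been resolved above by the monotonicity of the adversary. Consequently, it is without loss of generality to analyze $\set{X_i^*}$ as an optimal solution on the non-adversarial graph $\mathcal{G}$, which is precisely what the lemma asserts.
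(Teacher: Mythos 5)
Your proposal is correct and takes essentially the same approach as the paper: feasibility for $\mathcal{G}$ transfers because the monotone adversary only adds non-edge constraints, and the objective transfers because $\bar{A}_{ij} \leq A_{ij}$ entrywise while inner products are non-negative (the paper attributes non-negativity to constraint \prettyref{eq:sdp4}, apparently a typo for \prettyref{eq:sdp5}). You spell out the feasibility direction and the planted-solution baseline more explicitly than the paper's terse one-paragraph argument, but the substance is the same; the only minor imprecision is calling $\{X_i^*\}$ ``optimal'' on $\mathcal{G}$ at the end, when all that is (and needs to be) established is that it is feasible on $\mathcal{G}$ with objective at least $k^2/2$.
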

\begin{proof}
    Let us assume the monotone adversary removes edges arbitrarily from the subgraphs $\mathcal{G}[\mathcal{V} \setminus \mathcal{S}],~\mathcal{G}[\mathcal{S}, \mathcal{V} \setminus \mathcal{S}]$ and the new resulting adjacency matrix is $\bar{A}$. Then for any feasible solution $\set{\set{Y_i}_{i=1}^{n}, I_Y}$ of the SDP \ref{sdp:dks}, we have $\sum\limits_{i \in P, j \in Q} \bar{A}_{ij}\inprod{Y_i, Y_j} \leq \sum\limits_{i \in P, j \in Q} A_{ij} \inprod{Y_i, Y_j}$ for $\forall P, Q \subseteq \mathcal{V}$. This holds because of the non-negativity SDP constraint \ref{eq:sdp5}. Thus the upper bounds on SDP contribution by vectors in $\mathcal{G}\brac{\mathcal{S}, \mathcal{V} \setminus \mathcal{S}}$ and $\mathcal{G}\brac{\mathcal{V} \setminus \mathcal{S}}$ as presented in the different claims later in Appendix \ref{sec:analysis} are intact and the rest of the proof follows exactly. Hence we can ignore this step in the analysis of our algorithm.\qed
\end{proof}
Let $A$ denote the adjacency matrix of the input graph before the action of the adversary (before Step \ref{step:seven}) and $\bar{A}$ denote the same after the action of the adversary. Due to the Lemma \ref{lem:adversary}, we can work with the adjacency matrix $A$ in the rest of the paper.

\subsection{Organization}
We present the introduction with all the relevant related work in Section \ref{sec:intro}, we shift the main analysis (due to page limit and also because it has a high overlap with the work in Khanna and Louis \cite{khanna_et_al:LIPIcs.FSTTCS.2020.27}) to Appendix \ref{sec:analysis} while we show the recovery part of the clique to Section \ref{sec:recovery} and the conclusion (Section \ref{sec:conclusion}) next.

\section{Recovering the planted clique}
\label{sec:recovery}
In the Appendix \ref{sec:analysis}, we show that under some mild conditions over the input parameters (namely when, $p$ is ``large'' and $\psi$ is ``small'') with high probability (over the randomness of the input), we have,
\begin{equation}
    \label{eq:main_eq}
    {\rm I\!E}_{i \sim \mathcal{S}} \norm{X_i}^2 \geq 1 - \psi.
\end{equation}
$\psi$ is defined in Definition $\ref{def:psi}$. We define a vertex set $\mathcal{T} \defeq \{i \in \mathcal{V} : \norm{X_i}^2 \geq 1 - \alpha \psi \}$ where $1 < \alpha < \ffrac{1}{\psi}$ is a parameter to be chosen later.

We will next show that for a cleverly chosen value of $\alpha$, we can show that $\mathcal{T}$ is also a clique, and using the facts that $\abs{\mathcal{T} \cap \mathcal{S}} > 0$ and that the boundary of the subgraph $\mathcal{G}\brac{\mathcal{S}}$ is random, we further show that $\mathcal{T} \subseteq \mathcal{S}$. Once we have established this, it is easy to recover the rest of the vertices of $\mathcal{S} \setminus \mathcal{T}$ using a simple greedy heuristic. Before that, we recall an important technical results from \cite{khanna_et_al:LIPIcs.FSTTCS.2020.27}.

\begin{lemma}[Restatement of Lemma 3.5 from \cite{khanna_et_al:LIPIcs.FSTTCS.2020.27}]
    \label{lem:size_T}
    With high probability (over the randomness of the input),
    $\abs{\mathcal{T} \cap \mathcal{S}} \geq \left(1 - \dfrac{1}{\alpha}\right)k.$
\end{lemma}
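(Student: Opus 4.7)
The plan is to derive this as a direct Markov-style consequence of \prettyref{prop:main_prop}. By assumption, with high probability over the randomness of the input graph we have
\[
\tfrac{1}{k}\sum_{i \in \mathcal{S}} \norm{X_i}^2 \;=\; \E_{i \sim \mathcal{S}} \norm{X_i}^2 \;\geq\; 1 - \psi.
\]
We condition on this event throughout. The SDP constraint \prettyref{eq:sdp6} gives the pointwise upper bound $\norm{X_i}^2 \leq 1$ for every $i \in \mathcal{V}$, and in particular for every $i \in \mathcal{S}$. So we have a collection of $k$ numbers in $[0,1]$ whose average is at least $1 - \psi$, and we want to lower bound the count of those that lie above the threshold $1 - \alpha \psi$.

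Let $q \defeq \abs{\mathcal{S} \setminus \mathcal{T}} = \abs{\set{i \in \mathcal{S} : \norm{X_i}^2 < 1 - \alpha\psi}}$, so that $\abs{\mathcal{T} \cap \mathcal{S}} = k - q$. Splitting the sum according to membership in $\mathcal{T}$ and using the two bounds (namely $\norm{X_i}^2 \leq 1$ on $\mathcal{T} \cap \mathcal{S}$ and $\norm{X_i}^2 < 1 - \alpha\psi$ on $\mathcal{S} \setminus \mathcal{T}$), we obtain
\[
k(1 - \psi) \;\leq\; \sum_{i \in \mathcal{S}} \norm{X_i}^2 \;\leq\; (k - q)\cdot 1 + q\cdot(1 - \alpha\psi) \;=\; k - q\alpha\psi.
\]
Rearranging yields $q \alpha \psi \leq k \psi$, i.e. $q \leq k/\alpha$ (using $\psi > 0$, which is implicit from the range $1 < \alpha < 1/\psi$ being well-defined). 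Substituting back gives $\abs{\mathcal{T} \cap \mathcal{S}} \geq k(1 - 1/\alpha)$, as required.

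There is really no obstacle here beyond invoking \prettyref{prop:main_prop}: the argument is a one-shot Markov/averaging step made possible by the SDP norm bound $\norm{X_i}^2 \leq 1$. The ``high probability'' qualifier in the statement is inherited entirely from \prettyref{prop:main_prop} (via \prettyref{lem:four_1}, which controls $\norm{B}$), so no additional probabilistic work is needed in this lemma itself.
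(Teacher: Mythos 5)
Your proof is correct and is essentially the paper's argument: the paper applies Markov's inequality to the nonnegative random variable $1-\norm{X_i}^2$ (bounded in $[0,1]$ by constraints \prettyref{eq:sdp5}, \prettyref{eq:sdp6}) under the uniform distribution on $\mathcal{S}$, while you unroll the same Markov step into an explicit counting/averaging bound over $\mathcal{S}$. The ingredients (\prettyref{prop:main_prop} for the lower bound on the average, the SDP constraint $\norm{X_i}^2\leq 1$ for the pointwise upper bound, and inheriting the high-probability qualifier from the control of $\norm{B}$) match exactly.
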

The next Lemma \ref{lem:clique1} is perhaps the most important technical result of this paper.
\begin{definition}
    \label{def:psi}
    Let $\psi$ be the function over the input parameters defined as, \\$\psi \defeq \dfrac{4\xi^2(np)(r+t+2)}{k^2\paren{1-6p-2\gamma-\dfrac{d}{s}-\dfrac{\lambda}{k}}^2}$ for the sake of brevity.
\end{definition}
\begin{lemma}
    \label{lem:clique1}
    For $\alpha = \ffrac{1}{(3\sqrt{\psi})}$ and $\psi \in (0, 1/9)$. With high probability (over the randomness of the input), the subgraph $\mathcal{G}\brac{\mathcal{T}}$ is a clique and moreover, $\mathcal{T} \subseteq \mathcal{S}$.
\end{lemma}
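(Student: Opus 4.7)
The plan is to split the proof into two pieces corresponding to the two assertions: (a) $\mathcal{G}[\mathcal{T}]$ is a clique, and (b) $\mathcal{T} \subseteq \mathcal{S}$. Part (a) is a short SDP-geometry argument that finally exploits the extra orthogonality constraint \prettyref{eq:sdp3}, while part (b) is a randomness argument relying on the fact that the $\mathcal{S}$-to-$\mathcal{V}\setminus\mathcal{S}$ boundary is still random even after the adversary.

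For part (a), pick any distinct $i,j \in \mathcal{T}$. By definition $\norm{X_i}^2, \norm{X_j}^2 \ge 1-\alpha\psi$, so \prettyref{lem:apx2} with $\e=\alpha\psi$ gives $\inprod{X_i,X_j} \ge 1-3\alpha\psi = 1-\sqrt{\psi}$, using $\alpha = 1/(3\sqrt{\psi})$. Since $\psi \in (0,1/9)$ we get $\sqrt{\psi} < 1/3$, hence $\inprod{X_i,X_j} > 2/3 > 0$. But SDP constraint \prettyref{eq:sdp3} forces $\inprod{X_i,X_j}=0$ whenever $(i,j)\notin E$; this contradiction yields $(i,j)\in E$ for every pair in $\mathcal{T}$, so $\mathcal{G}[\mathcal{T}]$ is a clique. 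This is the one place in the analysis where the constraint \prettyref{eq:sdp3} (present because $\mathcal{G}[\mathcal{S}]$ is a clique, not a generic dense subgraph) is used in an essential way.

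For part (b), \prettyref{lem:size_T} gives $\abs{\mathcal{T}\cap\mathcal{S}} \ge (1-1/\alpha)k = (1-3\sqrt{\psi})k$ w.h.p. Suppose for contradiction that some $v \in \mathcal{T}\setminus\mathcal{S}$ exists. Since $\mathcal{G}[\mathcal{T}]$ is a clique (in the post-adversary graph), $v$ must be adjacent to every vertex of $\mathcal{T}\cap\mathcal{S}$, and by monotonicity (\prettyref{lem:adversary}, adversary only removes edges) $v$ had at least $(1-3\sqrt{\psi})k$ random neighbors in $\mathcal{S}$ in the pre-adversary graph. The edges of $\mathcal{G}[\mathcal{S}\times(\mathcal{V}\setminus\mathcal{S})]$ are i.i.d.\ Bernoulli$(p)$, so for each fixed $v \in \mathcal{V}\setminus\mathcal{S}$ its number of neighbors in $\mathcal{S}$ is $\mathrm{Bin}(k,p)$. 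Because $\nu<1$ implicitly forces $1-6p-2\gamma-d/s-\lambda/k>0$ and hence $p < 1/6$, and $\sqrt{\psi} < 1/3$, there is a constant gap between $p$ and $1-3\sqrt{\psi}$; a standard Chernoff bound gives $\Pr[\mathrm{Bin}(k,p) \ge (1-3\sqrt{\psi})k] \le \exp(-\Omega(k))$. A union bound over the at most $n$ candidate vertices bounds the probability of the bad event by $n\exp(-\Omega(k)) = o(1)$, given that $k$ is large enough under the hypotheses (via $k \ge 6\xi\sqrt{np(r+t+1)}$ and $p \ge \kappa\log n/n$). Hence w.h.p.\ no such $v$ exists, giving $\mathcal{T}\subseteq\mathcal{S}$.

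The main obstacle is the step in part (b): one must be careful that although $\mathcal{T}$ and the SDP solution $\{X_i\}$ depend on the random edges, the statement ``some $v\in\mathcal{V}\setminus\mathcal{S}$ has $\ge (1-3\sqrt{\psi})k$ neighbors in $\mathcal{S}$'' is an event depending only on the random edges, so the union bound applies cleanly. The remaining detail is to verify the Chernoff tail in the regime where the hypotheses guarantee a constant multiplicative gap between $1-3\sqrt{\psi}$ and $p$, which follows from $\psi < 1/9$ together with the implicit constraint $p < 1/6$; everything else reduces to bookkeeping.
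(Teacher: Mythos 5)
Your proof matches the paper's two-part decomposition: part (a) is identical (apply \prettyref{lem:apx2} with $\epsilon = \alpha\psi$ to get $\inprod{X_i,X_j} \geq 1-3\alpha\psi = 1-\sqrt{\psi} > 0$, then contradict the orthogonality constraint \prettyref{eq:sdp3}), and part (b) uses the same union-bound-over-$\mathcal{V}\setminus\mathcal{S}$ idea built on \prettyref{lem:size_T}. Your part (b) is, if anything, more careful than the paper's: you correctly flag that $\mathcal{T}$ depends on the random edges and therefore pass to the deterministic superset event ``some $v \in \mathcal{V}\setminus\mathcal{S}$ has $\geq (1-3\sqrt{\psi})k$ neighbors in $\mathcal{S}$'', which depends only on the random bipartite edges and admits a clean union bound via $\mathrm{Bin}(k,p)$; the paper instead writes $\ProbOp[\cdots] \leq n\,p^{|\mathcal{T}\cap\mathcal{S}|}$ directly, implicitly treating the random set $\mathcal{T}\cap\mathcal{S}$ as fixed. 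The one imprecision on your side is the assertion that $p < 1/6$ together with $\psi < 1/9$ yields a ``constant gap between $p$ and $1-3\sqrt{\psi}$'': this does not follow, since as $\psi \to 1/9$ one has $1-3\sqrt{\psi} \to 0$, which can fall below $p$; the Chernoff step therefore needs an additional quantitative hypothesis separating $\psi$ from $1/9$ (or $k$ from below). The paper shares this exact looseness, dispatching it with the remark that ``step 4 holds when $p, k$ is `large' and $\psi$ is `small','' so this is a shared imprecision rather than a gap unique to your argument.
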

\begin{proof}
    By applying Lemma \ref{lem:lb_exp_xi_xj_gen} (Part 1) to the set $\mathcal{T}$, we get for all $i, j \in \mathcal{T}: \inprod{X_i, X_j} \geq 1-3\alpha\psi$. We set $\alpha$ such that $1-3\alpha\psi > 0 \iff \alpha < \ffrac{1}{(3\psi)}$. Thus we can set $\alpha = \ffrac{1}{(3\sqrt{\psi})}$. It does satisfy the bounds on $\alpha$, namely $\alpha \in (1, 1/\psi)$ when $\psi \in (0, 1/9)$. By the SDP constraints, $\inprod{X_i, X_j} = 0~\forall \paren{i, j} \notin E$ (the extra added constraint, or, Equation \ref{eq:extra_constraint}), we have that the subgraph induced on $\mathcal{T}$ is a clique. This is easy to see. Consider any two vertices $u, v \in \mathcal{T}$ such that there is no edge between $u$ and $v$, then by the above SDP constraint, $\inprod{X_u, X_v} = 0$, however by the definition of set $\mathcal{T}$, $\inprod{X_u, X_v} > 0$. This is a contradiction and thus $\mathcal{T}$ is indeed a clique.\\
    Next we prove that w.h.p. $\mathcal{T} \subseteq \mathcal{S}$. By Lemma \ref{lem:size_T}, $\abs{\mathcal{T} \cap \mathcal{S}} \geq \left(1 - \dfrac{1}{\alpha}\right)k = \paren{1-3\sqrt{\psi}}k > 0$ when $\psi \in (0, 1/9)$.
    \begin{align*}
        \therefore{\rm I\!P} \brac{\mathcal{T} \subsetneq \mathcal{S}} &\leq {\rm I\!P} \brac{\exists v \in \mathcal{V} \setminus \mathcal{S} \text{ which has an edge with all the vertices of } \mathcal{T} \cap \mathcal{S}} \\ &\leq n p^{\abs{\mathcal{T} \cap \mathcal{S}}}\leq np^{\paren{1-3\sqrt{\psi}}k}= o(1).
    \end{align*}
    where we used the union bound in step 2 and the lower bound on $\abs{\mathcal{T} \cap \mathcal{S}}$ in step 3. The step 4 holds when $p, k$ is ``large'' and $\psi$ is ``small''.
\end{proof}
We now have all the ingredients to prove our main result.
\begin{proof}[Proof of Theorem \ref{thm:main}]
    By Lemma \ref{lem:clique1} we showed that $\mathcal{T} \subseteq \mathcal{S}$, now we can use a greedy strategy to recover the rest of $\mathcal{S}$. We iterate over all vertices in $\mathcal{V} \setminus \mathcal{T}$ and add them to our set if it has edges to all of $\mathcal{T}$. A calculation similar to the one shown above can be used to ensure that no vertex of $\mathcal{V} \setminus \mathcal{S}$ enters in this greedy step.
    Also note that 
    $
        \alpha\psi = \dfrac{\psi}{3\sqrt{\psi}} = \dfrac{\sqrt{\psi}}{3}. \text{ We define } \nu \defeq 9\psi.
    $
    Here $\nu$ is nothing but a normalization of $\psi$ for a cleaner representation. We summarize this in Algorithm \ref{alg:cap} below. It is easy to see that the output of this algorithm, the set $\mathcal{Q}$ is essentially the planted clique $\mathcal{S}$ itself.
\end{proof}
\begin{algorithm}[H]
    \caption{Algorithm to recover $\mathcal{S}.$}\label{alg:cap}
        \begin{algorithmic}[1]
            \Require An Instance of \cliqueparams.
            \Ensure A vertex set $\mathcal{Q}$.
            \State Solve SDP \ref{sdp:dks} to get the vectors $\set{\set{X_i}_{i=1}^{n}, I}$.
            \State Let $\mathcal{T} = \set{i \in \mathcal{V} : \norm{X_i}^2 \geq 1-\paren{\sqrt{\nu}/9}}$. 
            \State Initialize $\mathcal{Q} = \mathcal{T}$.
            \For{vertex $v \in \mathcal{V} \setminus \mathcal{T}$,}
                \State If $v$ shares an edge with all the vertices in $\mathcal{Q}$, then update $\mathcal{Q} = \mathcal{Q} \cup \set{v}$.
                \State Else discard $v$.
            \EndFor
            \State Return $\mathcal{Q}$.
        \end{algorithmic}
\end{algorithm}

\section{Conclusion and Future Work}
\label{sec:conclusion}
In this paper, we studied the \cliqueprob~in a semi-random model and presented an SDP-based algorithm to recover the clique exactly. 

A powerful semi-random model would have any $k$ sized induced subgraph in $\mathcal{G}[\mathcal{V} \setminus \mathcal{S}]$ have an average degree of $\gamma k$, and the goal would be to give an efficient algorithm to be still able to recover the planted clique when $k=\Omega(\sqrt{n})$ while tolerating monotone deletions. To the best of our knowledge, this problem hasn't been studied in the literature, so we pose it as an interesting open question.

\section*{Acknowledgements}
YK thanks Akash Kumar, Anand Louis, and Rameesh Paul for helpful discussions. He also thanks the anonymous reviewers for their useful comments on earlier versions of the paper. He was supported by the Ministry of Education, Government of India during his stay at IISc.

%
%
%
\bibliographystyle{splncs04}
\bibliography{mybibliography}
\newpage
\appendix
\section{Analysis}
\label{sec:analysis}
We now present the complete proofs of all the technical claims in this section even though it has a overlap with the one done in the work of Khanna and Louis \cite{khanna_et_al:LIPIcs.FSTTCS.2020.27}. We bound the SDP mass corresponding to different subgraphs. The idea is to show that the SDP \ref{sdp:dks} puts a large fraction of its total mass on the vertices of $\mathcal{G}\brac{\mathcal{S}}$. 

\noindent
 We decompose the SDP objective into multiple parts (corresponding to different subgraphs) and bound each of them separately in Section \ref{sec:bound_sdp} and then combine these bounds in Section \ref{sec:putting}.
\begin{align}
    \nonumber
    \sum\limits_{i, j \in \mathcal{V}} A_{ij}\inprod{X_i, X_j} &= \underbrace{\sum\limits_{i, j \in \mathcal{S}} A_{ij}\inprod{X_i, X_j}}_{\text{Contribution from } \mathcal{G}\brac{\mathcal{S}}} + \underbrace{2\sum_{i \in \mathcal{S}, j \in \mathcal{V} \setminus \mathcal{S}} A_{ij} \inprod{X_i, X_j}}_{\text{Contribution from } \mathcal{G}\brac{\mathcal{S}, \mathcal{V} \setminus \mathcal{S}}} \nonumber \\&+ \sum\limits_{\ell=1}^{r}\underbrace{\sum\limits_{i \in \Lambda_{\ell}, j \in (\mathcal{V} \setminus \mathcal{S}) \setminus \Lambda_{\ell}} A_{ij}\inprod{X_i,X_j}}_{\text{Contribution from } \mathcal{G}\brac{\Lambda_{\ell}, (\mathcal{V} \setminus \mathcal{S}) \setminus \Lambda_{\ell}}} \nonumber\\& + \sum_{\ell=1}^{r}\underbrace{\sum_{i, j \in \Lambda_{\ell}} A_{ij} \inprod{X_i, X_j}}_{\text{Contribution from } \mathcal{G}\brac{\Lambda_{\ell}}} + \sum\limits_{\ell=1}^{t}\underbrace{\sum\limits_{i \in \Pi_{\ell}, j \in (\mathcal{V} \setminus \mathcal{S}) \setminus \Pi_{\ell}} A_{ij}\inprod{X_i,X_j}}_{\text{Contribution from } \mathcal{G}\brac{\Pi_{\ell}, (\mathcal{V} \setminus \mathcal{S}) \setminus \Pi_{\ell}}} \nonumber\\&+ \sum_{\ell=1}^{t}\underbrace{\sum_{i, j \in \Pi_{\ell}} A_{ij} \inprod{X_i, X_j}}_{\text{Contribution from } \mathcal{G}\brac{\Pi_{\ell}}}+\label{eq:sum1} \underbrace{\sum\limits_{i, j \in \Gamma} A_{ij}\inprod{X_i, X_j}}_{\text{Contribution from } \mathcal{G}\brac{\Gamma}}.
\end{align}
Note that the $1^{\text{st}}$, $4^{\text{th}}$, and the $6^{\text{th}}$ term in the Equation \ref{eq:sum1} corresponds to the subgraphs $\mathcal{G}\brac{\mathcal{S}},~\mathcal{G}\brac{\Lambda_{i_1}}~\forall {i_1} \in [r],\text{ and }~\mathcal{G}\brac{\Pi_{i_2}}~\forall {i_2} \in [t]$ respectively while the rest of the terms (i.e. the contribution from the random subgraph) can be further split as follows. This is also called the centering trick.
\begin{align}
    \label{eq:sum_11}
    2\sum_{i \in \mathcal{S}, j \in \mathcal{V} \setminus \mathcal{S}} A_{ij} \inprod{X_i, X_j} = {2p\sum_{i \in \mathcal{S}, j \in \mathcal{V} \setminus \mathcal{S}} \inprod{X_i, X_j}}+ {2\sum_{i \in \mathcal{S}, j \in \mathcal{V} \setminus \mathcal{S}} \paren{A_{ij} - p}\inprod{X_i, X_j}}
\end{align}
\begin{align}
    \label{eq:sum_33}
    \sum\limits_{\ell=1}^{r}\sum\limits_{i \in \Lambda_{\ell}, j \in (\mathcal{V} \setminus \mathcal{S}) \setminus \Lambda_{\ell}} A_{ij}\inprod{X_i,X_j} \nonumber&= {p\sum\limits_{\ell=1}^{r}\sum\limits_{i \in \Lambda_{\ell}, j \in (\mathcal{V} \setminus \mathcal{S}) \setminus \Lambda_{\ell}} \inprod{X_i,X_j}} \\&+
    {\sum\limits_{\ell=1}^{r}\sum\limits_{i \in \Lambda_{\ell}, j \in (\mathcal{V} \setminus \mathcal{S}) \setminus \Lambda_{\ell}} (A_{ij}-p)\inprod{X_i,X_j}}
\end{align}
\begin{align}
    \label{eq:sum_44}
    \sum\limits_{\ell=1}^{t}\sum\limits_{i \in \Pi_{\ell}, j \in (\mathcal{V} \setminus \mathcal{S}) \setminus \Pi_{\ell}} A_{ij}\inprod{X_i,X_j} \nonumber&= {p\sum\limits_{\ell=1}^{t}\sum\limits_{i \in \Pi_{\ell}, j \in (\mathcal{V} \setminus \mathcal{S}) \setminus \Pi_{\ell}} \inprod{X_i,X_j}}\\&+{\sum\limits_{\ell=1}^{t}\sum\limits_{i \in \Pi_{\ell}, j \in (\mathcal{V} \setminus \mathcal{S}) \setminus \Pi_{\ell}} (A_{ij}-p)\inprod{X_i,X_j}}
\end{align}
\begin{align}
    \label{eq:sum_66}
    \sum\limits_{i, j \in \Gamma} A_{ij}\inprod{X_i, X_j} = p \sum\limits_{i, j \in \Gamma} \inprod{X_i, X_j} + \sum\limits_{i, j \in \Gamma} (A_{ij} - p)\inprod{X_i, X_j}
\end{align}
\noindent
Note that there are two kinds of terms in equations \ref{eq:sum_11}, \ref{eq:sum_33}, \ref{eq:sum_44}, and \ref{eq:sum_66} which only depends on the SDP constraints, and the second, which uses the adjacency matrix of $\mathcal{G}$. Before we proceed, we introduce a new matrix for convenience.
\begin{definition}
\label{def:matrix_b}
Let $B$ be a $n \times n$ sized centered matrix (i.e., ${\rm I\!E}[B] = 0$) defined as follows.
\[ B_{ij} \defeq \begin{cases}  0 & i,j \in \mathcal{S} \text{ or } i,j \in \Lambda_{i_1}~\forall {i_1} \in[r] \text{ or } i,j \in \Pi_{i_2}~\forall {i_2} \in[t]\\
A_{ij}-p & \textrm{otherwise} \end{cases}. \]
\end{definition}
\noindent
Definition \ref{def:matrix_b} allows us to rewrite the centered terms from Equations \ref{eq:sum_11}, \ref{eq:sum_33}, \ref{eq:sum_44}, and \ref{eq:sum_66} as follows.
\begin{align}
    \nonumber
     {2\sum_{i \in \mathcal{S}, j \in \mathcal{V} \setminus \mathcal{S}} \paren{A_{ij} - p}\inprod{X_i, X_j}} &+
     {\sum\limits_{\ell=1}^{r}\sum\limits_{i \in \Lambda_{\ell}, j \in (\mathcal{V} \setminus \mathcal{S}) \setminus \Lambda_{\ell}} (A_{ij}-p)\inprod{X_i,X_j}} \\&+ \nonumber{\sum\limits_{\ell=1}^{t}\sum\limits_{i \in \Pi_{\ell}, j \in (\mathcal{V} \setminus \mathcal{S}) \setminus \Pi_{\ell}} (A_{ij}-p)\inprod{X_i,X_j}} \\&+ {\sum\limits_{i, j \in \Gamma} (A_{ij} -p) \inprod{X_i, X_j}}
        \label{eq:sum3} 
      \\&= {\sum\limits_{i, j \in \mathcal{V}} B_{ij}\inprod{X_i,X_j}}.
\end{align}
\subsection{Bounding the SDP terms}
\label{sec:bound_sdp}
In this section, we show an upper bound on the various terms of the SDP objective.
\subsubsection{Contribution from $\mathcal{G}\brac{\mathcal{S}}$, i.e., the planted clique.}
\begin{lemma}
    \label{lem:aux_one}
    For any set $\mathcal{V'} \subseteq \mathcal{V}$,
    $ \sum\limits_{i, j \in \mathcal{V'}} A_{ij}\inprod{X_i, X_j} \leq \sum\limits_{i \in \mathcal{V'}} {\sum\limits_{j\in \mathcal{V'}}A_{ij} } \norm{X_i}^2. $
\end{lemma}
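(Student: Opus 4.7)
The plan is to bound each inner product $\inprod{X_i,X_j}$ by a symmetric quantity in $\|X_i\|^2$ and $\|X_j\|^2$, and then use the symmetry of $A$ to collapse the double sum into the desired single sum of weighted degrees.

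First, I would observe that SDP constraint \prettyref{eq:sdp5} directly gives $\inprod{X_i,X_j} \leq \inprod{X_i,X_i} = \|X_i\|^2$ for all $i,j \in [n]$ (and the diagonal case $i=j$ is trivial). By symmetry of the inner product we also get $\inprod{X_i,X_j} \leq \|X_j\|^2$. Averaging these two bounds,
\[
\inprod{X_i,X_j} \leq \tfrac{1}{2}\paren{\|X_i\|^2 + \|X_j\|^2} \qquad \forall\, i,j \in V'.
\]
(Alternatively one could invoke Cauchy--Schwarz followed by AM--GM, but the SDP constraint makes this immediate.)

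Next, I would plug this bound into the left-hand side and split the sum into two symmetric pieces:
\[
\sum_{i,j \in V'} A_{ij}\inprod{X_i,X_j} \;\leq\; \tfrac{1}{2}\sum_{i,j \in V'} A_{ij}\|X_i\|^2 \;+\; \tfrac{1}{2}\sum_{i,j \in V'} A_{ij}\|X_j\|^2.
\]
Since $A_{ij} \geq 0$ (the graph has non-negative weights) and $A_{ij}=A_{ji}$, relabeling the dummy indices $(i,j)\mapsto(j,i)$ in the second sum shows the two sums on the right are equal. Hence
\[
\sum_{i,j \in V'} A_{ij}\inprod{X_i,X_j} \;\leq\; \sum_{i,j \in V'} A_{ij}\|X_i\|^2 \;=\; \sum_{i \in V'}\paren{\sum_{j \in V'} A_{ij}}\|X_i\|^2,
\]
which is precisely the claimed inequality.

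There is no real obstacle here; the only thing to watch is that the bound $\inprod{X_i,X_j}\leq \|X_i\|^2$ is guaranteed by the SDP (not by Cauchy--Schwarz alone, which would only give $\inprod{X_i,X_j}\leq \|X_i\|\cdot\|X_j\|$), and that the non-negativity of $A_{ij}$ is needed to preserve the direction of the inequality when multiplying through.
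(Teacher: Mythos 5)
Your proof is correct and follows essentially the same route as the paper: both reduce to the bound $\inprod{X_i,X_j} \leq \tfrac{1}{2}\paren{\norm{X_i}^2 + \norm{X_j}^2}$, multiply by $A_{ij}\geq 0$, and symmetrize. The only cosmetic difference is how that intermediate inequality is justified --- you invoke SDP constraint \prettyref{eq:sdp5} and average, whereas the paper expands $\norm{X_i - X_j}^2 \geq 0$ directly (which does not rely on any SDP constraint at all), so the paper's version is marginally more self-contained, but the argument is the same.
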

\begin{proof}
    \begin{align*}
    \sum\limits_{i, j \in \mathcal{V'}} A_{ij}\inprod{X_i, X_j} 
    & \leq  \sum\limits_{i, j \in \mathcal{V'}} A_{ij}
    \left(\dfrac{\norm{X_i}^2 + \norm{X_j}^2}{2}\right) = \sum\limits_{i \in \mathcal{V'}} {\sum_{j\in \mathcal{V'}}A_{ij} } \norm{X_i}^2.
    \end{align*}
    Here the first inequality holds by expanding, $\norm{X_i - X_j}^2 \geq 0~\forall i,j \in \mathcal{V'}$ and the second equality holds because $A_{ij} = A_{ji}~\forall i, j \in \mathcal{V}$.\qed
\end{proof}
\begin{lemma}
    \label{lem:one}
    $ \sum\limits_{i, j \in \mathcal{S}} A_{ij}\inprod{X_i, X_j} \leq k^2\paren{{\rm I\!E}_{i \sim \mathcal{S}} \norm{X_i}^2}. $
\end{lemma}
\begin{proof}
    From Lemma \ref{lem:aux_one} with $\mathcal{V'} = \mathcal{S}$.
    \begin{align*}
        \sum\limits_{i, j \in \mathcal{S}} A_{ij}\inprod{X_i, X_j} \leq \sum\limits_{i \in \mathcal{S}} {\sum_{j\in \mathcal{S}} A_{ij} } \norm{X_i}^2 
        = k^2 {\rm I\!E}_{i \sim \mathcal{S}} \norm{X_i}^2. 
    \end{align*}
    Here the last equality follows from the fact that the graph induced on $\mathcal{S}$ is a clique, so $\sum_{j\in \mathcal{S}}A_{ij}  = k$ for each $i \in \mathcal{S}$.\qed
\end{proof}
\subsubsection{Contribution from the random subgraphs.}
\begin{lemma}
    \label{lem:two}
    $ \sum\limits_{i \in \mathcal{S}, j \in \mathcal{V} \setminus \mathcal{S}} \inprod{X_i,X_j} \leq 3k^2\paren{1-{\rm I\!E}_{i \sim \mathcal{S}} \norm{X_i}^2} .$
\end{lemma}
Before proving our main results, we setup some groundwork. 
\begin{lemma}
    \label{lem:lb_exp_xi_xj_gen}
    Let $\set{\set{Y_i}_{i=1}^{n}, I_Y}$ be any feasible solution of SDP \ref{sdp:dks} and $\mathcal{V'} \subseteq \mathcal{V}$ such that, 
    \begin{enumerate}
        \item If $\norm{Y_i}^2 \geq 1-\epsilon\text{ for all } i \in \mathcal{V'}$ where $0 \leq \epsilon\leq 1$, then $\inprod{Y_i, Y_j} \geq 1-3\epsilon\text{ for all } i, j \in \mathcal{V'}$.
        \item If ${\rm I\!E}_{i \sim \mathcal{V'}}\norm{Y_i}^2 \geq 1-\epsilon$ where $0 \leq \epsilon\leq 1$, then ${\rm I\!E}_{i,j \sim \mathcal{V'}} \inprod{Y_i, Y_j} \geq 1-4\epsilon$.
    \end{enumerate}
\end{lemma}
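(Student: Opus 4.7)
The plan is to exploit the two SDP constraints that together say $I_Y$ is a unit vector (constraint \prettyref{eq:sdp8}) and that $\inprod{Y_i, I_Y} = \norm{Y_i}^2$ (constraint \prettyref{eq:sdp7}). These give the identity $\norm{Y_i - I_Y}^2 = \norm{Y_i}^2 - 2\inprod{Y_i, I_Y} + 1 = 1 - \norm{Y_i}^2$, so that a lower bound on $\norm{Y_i}^2$ translates directly into an upper bound on $\norm{Y_i - I_Y}^2$. This is the single geometric fact driving both parts.

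First I would expand $\inprod{Y_i, Y_j}$ around the ``center'' $I_Y$:
\[
\inprod{Y_i, Y_j} = \inprod{Y_i - I_Y, Y_j - I_Y} + \inprod{Y_i, I_Y} + \inprod{I_Y, Y_j} - \norm{I_Y}^2 = \inprod{Y_i - I_Y, Y_j - I_Y} + \norm{Y_i}^2 + \norm{Y_j}^2 - 1 \mper
\]
For part (a), I would apply Cauchy--Schwarz to the cross term: since $\norm{Y_i - I_Y}^2 \leq \e$ and $\norm{Y_j - I_Y}^2 \leq \e$ for every $i,j \in V'$, the cross term is at least $-\e$, and together with $\norm{Y_i}^2, \norm{Y_j}^2 \geq 1-\e$ the identity yields $\inprod{Y_i, Y_j} \geq -\e + 2(1-\e) - 1 = 1 - 3\e$.

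For part (b), I would average the identity over $i, j \sim V'$ independently. The key observation is that
\[
\E_{i,j \sim V'} \inprod{Y_i - I_Y, Y_j - I_Y} = \Norm{\E_{i \sim V'}\brac{Y_i - I_Y}}^2 \geq 0 \mcom
\]
since an average of inner products factorizes as the squared norm of the average vector. Combining this with $\E_{i \sim V'}\norm{Y_i}^2 \geq 1-\e$ gives $\E_{i,j \sim V'}\inprod{Y_i, Y_j} \geq 2(1-\e) - 1 = 1 - 2\e$, which in particular implies the claimed $1 - 4\e$.

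There is no real obstacle here; the whole argument hinges on recognizing that constraints \prettyref{eq:sdp7} and \prettyref{eq:sdp8} make $I_Y$ behave like a natural ``centroid'' around which to expand, after which part (a) is pure Cauchy--Schwarz and part (b) is a one-line application of the fact that average inner products are nonnegative squared norms. If any subtlety arises it will only be in verifying that the expansion step uses only SDP constraints that hold for an arbitrary feasible solution (not just the one produced by the algorithm), which is straightforward since both \prettyref{eq:sdp7} and \prettyref{eq:sdp8} are part of the feasibility conditions.
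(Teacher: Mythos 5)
Your proof is correct. It is also a genuinely different and cleaner route than the paper's. The paper decomposes each $Y_i$ by \emph{projection} onto $I_Y$: it writes $Y_i = \norm{Y_i}^2\, I_Y + Z_i$ with $Z_i \perp I_Y$, derives $\norm{Z_i}^2 = \norm{Y_i}^2 - \norm{Y_i}^4$, and then uses Cauchy--Schwarz on the $Z_i$'s. You instead \emph{center} at $I_Y$, using the identity
\[
\inprod{Y_i, Y_j} = \inprod{Y_i - I_Y, Y_j - I_Y} + \norm{Y_i}^2 + \norm{Y_j}^2 - 1
\]
together with $\norm{Y_i - I_Y}^2 = 1 - \norm{Y_i}^2$. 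For part (a) the two are of comparable effort, both ending in Cauchy--Schwarz. The real payoff of your approach is in part (b): the paper averages the pointwise inequality over $i,j$ and then needs a Jensen's-inequality step to control $\bigl(\E\sqrt{\norm{Y_i}^2 - \norm{Y_i}^4}\,\bigr)^2$ plus the additional observation $\E\norm{Y_i}^4 \geq (\E\norm{Y_i}^2)^2$, whereas you simply note that (by independence of $i$ and $j$ under $\E_{i,j \sim V'}$) the cross term $\E_{i,j}\inprod{Y_i - I_Y, Y_j - I_Y} = \bigl\|\E_{i}[Y_i - I_Y]\bigr\|^2 \geq 0$, which is elementary, shorter, and even gives the stronger bound $1-2\e$ instead of $1-4\e$ (which of course still implies the stated $1-4\e$). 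One thing worth making explicit when you write this up: the factorization of the expected inner product into the squared norm of the average does rely on the uniform distribution over $V' \times V'$ being a product measure (i.e.\ $i$ and $j$ drawn independently), which matches the paper's use of $\E_{i,j \sim V'}$ as shorthand for $\frac{1}{|V'|^2}\sum_{i,j \in V'}$.
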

\begin{proof}
    We first introduce vectors $Z_i \in {\rm I\!R}^{n+1}$ and scalars $\alpha_i \in {\rm I\!R}$ (for all $i \in \mathcal{V'}$) such that $Y_i = \alpha_i I_Y + Z_i$ and $\inprod{I_Y, Z_i} = 0$. 
    Using SDP constraint \ref{eq:sdp7} we get 
    \begin{align*}
    \norm{Y_i}^2 = \inprod{Y_i, I_Y} = \inprod{\alpha_i I_Y+ Z_i, I_Y} = \alpha_i \inprod{I_Y,I_Y} + \inprod{I_Y,Z_i} = \alpha_{i}.
    \end{align*}
    Next,
    \begin{align}
    \label{eq:y_val}
    \norm{Y_i}^2 = \alpha_i^2\norm{I}^2 + \norm{Z_i}^2 = \norm{Y_i}^4 + \norm{Z_i}^2
    \implies \norm{Z_i} = \sqrt{\norm{Y_i}^2 - \norm{Y_i}^4}.
    \end{align}
    For $i, j \in \mathcal{V'}$,
    \begin{align*}
    \inprod{Y_i, Y_j} &= \inprod{\norm{Y_i}^2I_Y+Z_i,\norm{Y_j}^2I_Y+Z_j}\\
    &= \norm{Y_i}^2\norm{Y_j}^2 \inprod{I_Y, I_Y} + \norm{Y_i}^2\inprod{I_Y, Z_j} + \norm{Y_j}^2\inprod{I_Y, Z_i} + \inprod{Z_i, Z_j}\\ 
    &= \norm{Y_i}^2\norm{Y_j}^2 + \inprod{Z_i, Z_j} \qquad (\because \inprod{I_Y, Z_i} = 0)\\
    &\geq \norm{Y_i}^2\norm{Y_j}^2 - \norm{Z_i}\norm{Z_j} \qquad (\text{since the max. angle can be }\pi)\\
    &= \norm{Y_i}^2\norm{Y_j}^2 - \left(\sqrt{\norm{Y_i}^2 - \norm{Y_i}^4}\right)\left(\sqrt{\norm{Y_j}^2 - \norm{Y_j}^4}\right)~~~~ (\text{by eqn } \ref{eq:y_val})
    \\&= \norm{Y_i}^2 \norm{Y_j}^2 - \paren{\sqrt{\norm{Y_i}^2(1 - \norm{Y_i}^2)}}\paren{\sqrt{\norm{Y_j}^2(1 - \norm{Y_j}^2)}}.
    \end{align*}
    \begin{enumerate}
        \item
        Since $\norm{Y_i}^2 \geq 1 - \epsilon$ using this in above equation we get
        \begin{align*}
       \inprod{Y_i,Y_j} &  \geq (1 - \epsilon)^2 - \sqrt{(\epsilon)}\sqrt{(\epsilon)} \qquad(\because \norm{Y_i}^2 \leq 1 \text{ and } 1 - \norm{Y_i}^2 \leq \epsilon ) \\ 
        & = (1 - \epsilon)^2 - \epsilon = 1 + \epsilon^2 -2\epsilon - \epsilon \geq 1 -3\epsilon .
        \end{align*}
        \item
        Summing both sides $\forall i, j \in \mathcal{V'}$ and dividing by $\abs{\mathcal{V'}}^2$,\\
        $
        \sum\limits_{i, j \in \mathcal{V'}} \dfrac{\inprod{Y_i, Y_j}}{\abs{\mathcal{V'}}^2} 
        \geq 
        \left(\sum\limits_{i \in \mathcal{V'}} \dfrac{\norm{Y_i}^2}{\abs{\mathcal{V'}}}\right)\left(\sum\limits_{j \in \mathcal{V'}} \dfrac{\norm{Y_j}^2}{\abs{V'}}\right)\\-\left(\sum\limits_{i \in \mathcal{V'}}\dfrac{\sqrt{\norm{Y_i}^2 \norm{Y_i}^4}}{\abs{\mathcal{V'}}}\right)
        \left(\sum\limits_{j \in \mathcal{V'}} \dfrac{\sqrt{\norm{Y_j}^2 - \norm{Y_j}^4}}{\abs{\mathcal{V'}}}\right).
        $
        \begin{align*}
        \therefore{\rm I\!E}_{i, j \sim \mathcal{V'}} \inprod{Y_i, Y_j} &\geq \left({\rm I\!E}_{i \sim \mathcal{V'}} \norm{Y_i}^2\right)^2 - \left({\rm I\!E}_{i \sim \mathcal{V'}} \sqrt{\norm{Y_i}^2 - \norm{Y_i}^4}\right)^2\\
        &\geq \left({\rm I\!E}_{i \sim \mathcal{V'}} \norm{Y_i}^2\right)^2 - \left({\rm I\!E}_{i \sim \mathcal{V'}} [\norm{Y_i}^2 - \norm{Y_i}^4]\right)\\ 
        &\qquad\qquad\qquad\paren{\text{by Jensen's inequality}}\\
        &\geq \left({\rm I\!E}_{i \sim \mathcal{V'}} \norm{Y_i}^2\right)^2 - {\rm I\!E}_{i \sim \mathcal{V'}} \norm{Y_i}^2 + \left({\rm I\!E}_{i \sim \mathcal{V'}} \norm{Y_i}^2\right)^2 \\&\qquad\qquad\qquad\paren{\because {\rm I\!E}\norm{Y_i}^4 \geq \paren{{\rm I\!E}\norm{Y_i}^2}^2}\\
        &= 2\left({\rm I\!E}_{i \sim \mathcal{V'}} \norm{Y_i}^2\right)^2 - {\rm I\!E}_{i \sim \mathcal{V'}} \norm{Y_i}^2
        \geq 2\left(1-\epsilon\right)^2 - 1 \\&\qquad\qquad\qquad\paren{\because {\rm I\!E}\norm{Y_i}^2  \leq 1} \\
        & = 1 - 4 \epsilon+ 2 \epsilon^2 \geq 1 - 4\epsilon.
        \end{align*}\qed
    \end{enumerate}
\end{proof}
\begin{lemma}
    \label{lem:lb_exp_xi_xj_s}
    $ {\rm I\!E}_{i,j \sim \mathcal{S}}\inprod{X_i,X_j} \geq 4{\rm I\!E}_{i \sim \mathcal{S}} \norm{X_i}^2 - 3 .$
\end{lemma}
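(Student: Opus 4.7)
The statement is essentially the specialization of \prettyref{lem:lb_exp_xi_xj_gen}(b) to the case $V' = \mathcal{S}$, so almost all of the technical work has already been done. My plan is to invoke \prettyref{lem:lb_exp_xi_xj_gen}(b) directly with the parameter $\e$ chosen to match the quantity $\E_{i\sim\mathcal{S}}\norm{X_i}^2$ appearing in the desired bound.

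Concretely, set $V' = \mathcal{S}$ and $Y_i = X_i$ (the SDP solution returned by \prettyref{sdp:dks}), and define
\[
\e \defeq 1 - \E_{i \sim \mathcal{S}} \norm{X_i}^2.
\]
The very first step is to verify that this choice of $\e$ lies in the allowed range $[0,1]$ required by \prettyref{lem:lb_exp_xi_xj_gen}(b). The upper bound $\e \le 1$ follows from $\norm{X_i}^2 \geq 0$ (a direct consequence of the Gram-matrix structure of the SDP vectors), and the lower bound $\e \geq 0$ follows from SDP constraint \prettyref{eq:sdp6}, which gives $\norm{X_i}^2 \leq 1$ and hence $\E_{i\sim\mathcal{S}}\norm{X_i}^2 \le 1$. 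Thus the hypothesis of \prettyref{lem:lb_exp_xi_xj_gen}(b) is met, and by definition $\E_{i\sim\mathcal{S}}\norm{X_i}^2 = 1-\e$.

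Applying \prettyref{lem:lb_exp_xi_xj_gen}(b) then yields
\[
\E_{i, j \sim \mathcal{S}} \inprod{X_i, X_j} \;\geq\; 1 - 4\e \;=\; 1 - 4\bigl(1 - \E_{i\sim\mathcal{S}}\norm{X_i}^2\bigr) \;=\; 4\,\E_{i\sim\mathcal{S}}\norm{X_i}^2 - 3,
\]
which is exactly the claim. There is no genuine obstacle here: the nontrivial inequalities (expressing $X_i$ as $\norm{X_i}^2 I + Z_i$, applying Jensen to $\sqrt{\norm{Y_i}^2 - \norm{Y_i}^4}$, and using $\E\norm{Y_i}^4 \geq (\E\norm{Y_i}^2)^2$) are already carried out inside \prettyref{lem:lb_exp_xi_xj_gen}(b). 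The only thing to double-check in the writeup is the parameter bookkeeping on $\e$ so that the ``$1-\e$'' on the hypothesis side lines up with the expression $\E_{i\sim\mathcal{S}}\norm{X_i}^2$ that appears on the conclusion side.
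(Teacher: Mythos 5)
Your proposal matches the paper's own proof exactly: both invoke \prettyref{lem:lb_exp_xi_xj_gen}(b) with $V'=\mathcal{S}$ and $\e = 1 - \E_{i\sim\mathcal{S}}\norm{X_i}^2$, then simplify $1-4\e$. The extra sanity check you include that $\e\in[0,1]$ (via \prettyref{eq:sdp6}) is sound and not in conflict with the paper's version, which leaves it implicit.
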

\begin{proof}
    Using Lemma \ref{lem:lb_exp_xi_xj_gen} (2) on the set $\mathcal{S}$ and with $\epsilon= 1-{\rm I\!E}_{i \sim \mathcal{S}}\norm{X_i}^2$, we get the lower bound $1-4\paren{1-{\rm I\!E}_{i \sim \mathcal{S}}\norm{X_i}^2} = 4{\rm I\!E}_{i \sim \mathcal{S}} \norm{X_i}^2 - 3$.\qed
\end{proof}
We are now ready to prove Lemma \ref{lem:two}.
\begin{proof}[Proof of Lemma \ref{lem:two}]
    \begin{align*}
    \sum\limits_{i \in \mathcal{S}, j \in \mathcal{V} \setminus \mathcal{S}} \inprod{X_i,X_j} 
    & = \sum\limits_{i \in \mathcal{S}, j \in \mathcal{V}} \inprod{X_i,X_j}  - \sum\limits_{i \in \mathcal{S}, j \in \mathcal{S}} \inprod{X_i,X_j} \\ 
    & \leq k\sum\limits_{i \in \mathcal{S}} \norm{X_i}^2 - \sum\limits_{i \in \mathcal{S}, j \in \mathcal{S}} \inprod{X_i,X_j} \qquad (\text{by SDP constraint \ref{eq:sdp5}})\\ 
    &= k^2\paren{{\rm I\!E}_{i \sim \mathcal{S}}\norm{X_i}^2} - k^2\paren{{\rm I\!E}_{i, j \sim \mathcal{S}} \inprod{X_i,X_j}}\\
    &\leq k^2\paren{{\rm I\!E}_{i \sim \mathcal{S}}\norm{X_i}^2} - k^2\paren{4{\rm I\!E}_{i \sim \mathcal{S}} \norm{X_i}^2 - 3} \qquad (\text{by Lemma \ref{lem:lb_exp_xi_xj_s}})\\
    &= 3k^2\paren{1-{\rm I\!E}_{i \sim \mathcal{S}}\norm{X_i}^2} .
    \end{align*}\qed
\end{proof}
\begin{lemma}
    \label{lem:three_3}
    \begin{enumerate}
        \item 	$ \sum\limits_{\ell=1}^{r}\sum\limits_{i \in \Lambda_{\ell}, j \in (\mathcal{V} \setminus \mathcal{S}) \setminus \Lambda_{\ell}} \inprod{X_i,X_j} \leq k^2\paren{1-{\rm I\!E}_{i \sim \mathcal{S}} \norm{X_i}^2} . $
        \item 	$ \sum\limits_{\ell=1}^{t}\sum\limits_{i \in \Pi_{\ell}, j \in (\mathcal{V} \setminus \mathcal{S}) \setminus \Pi_{\ell}} \inprod{X_i,X_j} \leq k^2\paren{1-{\rm I\!E}_{i \sim \mathcal{S}} \norm{X_i}^2} . $
        \item 	$ \sum\limits_{i, j \in \Gamma} \inprod{X_i,X_j} \leq k^2\paren{1-{\rm I\!E}_{i \sim \mathcal{S}} \norm{X_i}^2} . $
    \end{enumerate}
\end{lemma}
\begin{proof}[Proof of Part 1]
    Note that for all $\ell \in [r]$,
    \begin{align*}
    \sum\limits_{i \in \Lambda_\ell, j \in (V \setminus S)\setminus \Lambda_{\ell}} \inprod{X_i,X_j} 
    &\leq \sum\limits_{i \in \Lambda_\ell, j \in \mathcal{V}} \inprod{X_i,X_j} \leq k\sum\limits_{i \in \Lambda_\ell} \inprod{X_i,X_i}.
    \end{align*}
    The first inequality just follows from the SDP constraint \ref{eq:sdp5} (non-negativity) and the second one follows from the SDP constraint \ref{eq:sdp4}. Summing up for all $\ell \in [r]$,
    \begin{align*}
        \sum\limits_{\ell=1}^{r}\sum\limits_{i \in \Lambda_{\ell}, j \in (\mathcal{V} \setminus \mathcal{S}) \setminus \Lambda_{\ell}} \inprod{X_i,X_j} \leq k\sum_{\ell = 1}^{r}\sum\limits_{i \in \Lambda_\ell} \inprod{X_i,X_i} &\leq k\sum\limits_{i \in \mathcal{V} \setminus \mathcal{S}} \inprod{X_i,X_i} \\&\leq k^2\paren{1-{\rm I\!E}_{i \sim \mathcal{S}} \norm{X_i}^2} \\&\paren{\text{By SDP constraint \ref{eq:sdp2}}}.
    \end{align*}
    The proof of part 2 and 3 follows similarly.\qed
\end{proof}
\begin{lemma}
    \label{lem:four}
    $ \sum\limits_{i \in \mathcal{S}, j \in \mathcal{V} \setminus \mathcal{S}} B_{ij}\inprod{X_i,X_j} \leq 
    \norm{B}\sqrt{\sum\limits_{i \in \mathcal{S}} \norm{X_i}^2}\sqrt{\sum\limits_{i \in \mathcal{V} \setminus \mathcal{S}} \norm{X_i}^2} . $
\end{lemma}
\begin{proof}
    Recall that w.l.o.g., we can assume that the SDP vectors to be of dimension $n+1$. We define two matrices $Y, Z$ each of size $(n+1) \times n$. For all $i \in \mathcal{S}$, the vector $X_i$ is placed at the $i^{th}$ column of the matrix $Y$ while the rest of the entries of $Y$ are zero. Similarly for all $j \in \mathcal{V} \setminus \mathcal{S}$, the vector $X_j$ is placed at the $j^{th}$ column of the matrix $Z$ and rest of the entries of $Z$ are zero. We use $Y_i$ to denote the $i^{th}$ column vector of the matrix $Y$. Similarly, $Y^{T}_{j}$ denotes the $j^{th}$ column vector of the matrix $Y^T$.
    \begin{align*}
    \sum\limits_{i \in \mathcal{S}, j \in \mathcal{V} \setminus {\mathcal{S}}} B_{ij}\inprod{X_{i},X_{j}} 
    &\leq \sum\limits_{i, j \in \mathcal{V}}\sum\limits_{l=1}^{n+1} B_{ij}X_i(l)X_j(l) = \sum\limits_{l=1}^{n+1}\sum\limits_{i, j \in \mathcal{V}} B_{ij}X_i(l)X_j(l) \\
    &= 2\sum\limits_{l=1}^{n+1} \paren{Y^{T}_{l}}^T B \paren{Z^{T}_{l}}
    \leq 2\sum\limits_{l=1}^{n+1} \norm{Y^{T}_{l}}\norm{Z^{T}_{l}}\norm{B} \\
    \nonumber
    &\leq 2\norm{B}\sqrt{\sum\limits_{l=1}^{n+1} \norm{Y^{T}_{l}}^2}\sqrt{\sum\limits_{l=1}^{n+1} \norm{Z^{T}_{l}}^2} \\&\qquad\qquad (\text{by Cauchy-Schwarz inequality})\\
    &= 2\norm{B}\sqrt{\sum\limits_{i \in \mathcal{S}} \norm{X_i}^2}\sqrt{\sum\limits_{i \in \mathcal{V} \setminus \mathcal{S}} \norm{X_i}^2} \\&\qquad\qquad (\text{rewriting entries using columns}).
    \end{align*}\qed
\end{proof}
We start by stating a standard result on the spectral norm of random matrices.
\begin{theorem}[\cite{7282694}, Lemma 30; Wigner's Bound]
    \label{thm:ub_spectral_norm_gen}
    Let $M$ be a symmetric matrix of size $n \times n$ with zero diagonals and independent entries such that $M_{ij} = M_{ji} \sim Bern\paren{p_{ij}}$ for all $i<j$ with $p_{ij} \in [0,1]$. Assume $p_{ij}\paren{1-p_{ij}} \leq r$ for all $i < j$ and $nr = \Omega\paren{\log n}$. Then, with high probability (over the randomness of matrix $M$),
    $ \norm{M - {\rm I\!E}\brac{M}} \leq {\bigO\paren{1}}\sqrt{nr} .$
\end{theorem}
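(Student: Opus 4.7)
The plan is to establish this concentration bound by the classical \emph{trace / moment method} applied to the centered symmetric matrix $N := M - \E[M]$. Note that $N$ has zero diagonal and independent upper-triangular entries with $\E[N_{ij}] = 0$, $|N_{ij}| \leq 1$, and $\E[N_{ij}^2] = p_{ij}(1-p_{ij}) \leq r$. The strategy is to bound $\E\brac{\text{tr}(N^{2k})}$ for a carefully chosen even power $2k$, then convert to a spectral-norm tail bound via Markov's inequality.

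The first step will be to use the standard inequality $\norm{N}^{2k} \leq \text{tr}(N^{2k})$ (which holds because $N$ is symmetric, so $\text{tr}(N^{2k})$ equals the sum of the $2k$-th powers of the eigenvalues and $\norm{N}$ is the maximum absolute eigenvalue). Expanding the trace gives
\[
\E\brac{\text{tr}(N^{2k})} \;=\; \sum_{(i_1,\ldots,i_{2k}) \in [n]^{2k}} \E\brac{N_{i_1 i_2} N_{i_2 i_3} \cdots N_{i_{2k} i_1}},
\]
which is a sum over closed walks of length $2k$ on $[n]$. By independence and centering, only walks in which every used edge appears at least twice contribute.

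Next I would count these surviving walks by the standard encoding as labeled plane trees / Dyck paths. If a walk uses $\ell$ distinct edges (with each edge visited at least twice and the total length $2k$), it visits at most $\ell+1$ distinct vertices, contributes at most $r^\ell$ in expectation (since $|N_{ij}| \leq 1$ so $\E[N_{ij}^{2m}] \leq \E[N_{ij}^2] \leq r$), and the number of topologically distinct walk-shapes with $\ell$ edges is bounded by the Catalan number $C_k \leq 4^k$. Labeling vertices contributes a factor of $n(n-1)\cdots(n-\ell) \leq n^{\ell+1}$. Summing over $\ell \leq k$, one obtains a bound of the form
\[
\E\brac{\text{tr}(N^{2k})} \;\leq\; n \cdot (C \sqrt{nr})^{2k}
\]
for a universal constant $C$. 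Applying Markov's inequality then yields
\[
\Pr\Brac{\norm{N} \geq t} \;\leq\; \E\brac{\text{tr}(N^{2k})} / t^{2k} \;\leq\; n \cdot \Paren{C\sqrt{nr}/t}^{2k}.
\]
Choosing $t = 2C\sqrt{nr}$ and $k = \Theta(\log n)$ drives the right-hand side below $n^{-\Omega(1)}$.

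The main technical obstacle is Step 3 — the combinatorial bookkeeping to count closed walks in which every edge is traversed at least twice, with the correct dependence on $\ell$, $k$, and $n$. In the sparse regime this counting is delicate because walks that revisit edges many times can be numerous, and one must separately control high-multiplicity ``non-tree'' walks; this is precisely where the hypothesis $nr = \Omega(\log n)$ enters, since it guarantees that the $k = \Theta(\log n)$-th moment dominates the tail and that sparse-graph pathologies do not contribute more than a constant factor loss in the final bound. Once this combinatorial estimate is secured, the rest is a routine Markov/union-bound conversion.
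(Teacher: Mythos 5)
The paper does not prove \prettyref{thm:ub_spectral_norm_gen}: it is imported verbatim as a black-box result (Lemma~30 of the cited reference), so there is no in-paper argument to compare against. Judged on its own merits, your trace-method outline is the right \emph{general} strategy for moment bounds on random symmetric matrices, and Steps~1--2 (reducing $\norm{N}^{2k}$ to $\E\brac{\mathrm{tr}(N^{2k})}$, keeping only closed walks where every edge is traversed at least twice, and using $\abs{N_{ij}}\le 1$ to bound each edge factor by $r$) are correct as stated.

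However, the step you flag as the ``main technical obstacle'' is not merely an obstacle --- in the regime the theorem actually claims, $nr = \Omega(\log n)$, the naive trace method \emph{does not close}. The problem is the sum over walks with $\ell < k$ distinct edges. A Füredi--Koml\'os-style count of walk shapes with $\ell$ distinct edges and $2k$ steps picks up a factor of order $(ck)^{2(k-\ell)}$ from placing the $k-\ell$ ``excess'' steps, so the total is roughly
\[
\sum_{\ell\le k} n^{\ell+1}\, r^{\ell}\, (ck)^{2(k-\ell)}\, 4^{\ell}
\;=\;
n\,(4nr)^k \sum_{\ell\le k}\Paren{\frac{(ck)^2}{4nr}}^{k-\ell},
\]
and this geometric sum is $O(1)$ only when $k^2 \lesssim nr$. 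Since you need $k = \Theta(\log n)$ to convert the moment bound into a high-probability statement, this forces $nr \gtrsim \log^2 n$; refinements (Vu's improvement, more careful walk encodings) still require $nr \gtrsim \mathrm{polylog}(n)$. The bound at the sharp threshold $nr = \Omega(\log n)$ is proved in the literature by fundamentally different means: Feige--Ofek-type combinatorial discrepancy arguments (bounding edge counts between all pairs of vertex subsets and handling a small set of high-degree vertices separately) or, more recently, decomposition/regularization arguments in the style of Le--Levina--Vershynin and Bandeira--van Handel. If you want a self-contained proof at the stated threshold, you should switch to one of those techniques rather than try to push the moment method; if you are content with the weaker hypothesis $nr \ge c\log^2 n$ (or $\log^4 n$), your sketch can be completed, but then it proves a strictly weaker statement than the one in the theorem.
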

\begin{lemma}
    \label{lem:four_1}
    There exists universal constants $\kappa, \xi \in \mathbb{R}^{+}$ such that if \\$p \in \left[\dfrac{\kappa \log n}{n}, 1\right)$, then
    $\norm{B} \leq \xi\sqrt{np}$ with high probability (over the randomness of the input).
\end{lemma}

\begin{proof}
    Let $H$ be the adjacency matrix (symmetric) of the random subgraph of $\mathcal{G}$, i.e., without any edges inside $\mathcal{G}\brac{\mathcal{S}},~\mathcal{G}\brac{\Lambda_{i_1}}~\forall {i_1} \in [r],\text{ and }~\mathcal{G}\brac{\Pi_{i_2}}~\forall {i_2} \in [t]$. Therefore, for $i, j \in \mathcal{S}$, $i, j \in {\Lambda_{i_1}}~\forall i_1 \in [r]$, or $i, j \in {\Pi_{i_2}}~\forall i_2 \in [t]$, $H_{ij}$ is identically $0$. We know that all the entries of $H$ are independent because of the assumption of random edges being added independently. By definition, $H_{ij}$ is sampled from the Bernoulli distribution with parameter $p$ or $H_{ij} \sim Bern\paren{p}$. For the parameter range $p \in \left[\ffrac{\probabilityconstant \log n}{n}, 1\right)$, we have $p(1-p) \leq p$ and $np = \Omega(\log n)$. We now apply Theorem \ref{thm:ub_spectral_norm_gen} to matrix $H$ with the parameter $r = p$ to get,
    $
    \norm{H - {\rm I\!E}\brac{H}} \leq {\bigO\paren{1}}\sqrt{np} = \matrixconstant\sqrt{np} .
    $
    where $\matrixconstant \in \mathbb{R}^{+}$ is the constant from Theorem \ref{thm:ub_spectral_norm_gen}. By definition, we have $B = H - {\rm I\!E}\brac{H}$. Thus,
    $\norm{B} = \norm{H - {\rm I\!E}\brac{H}} \leq \matrixconstant\sqrt{np}.$ \qed
\end{proof}

\begin{lemma}
    \label{lem:five}
    With high probability (over the randomness of the input),
    \[ \sum\limits_{i, j \in \mathcal{V}} B_{ij}\inprod{X_i,X_j} \leq 
     2\xi k \sqrt{np} {\sqrt{(r+t+1)}}\sqrt{\paren{1-{\rm I\!E}_{i \sim S} \norm{X_i}^2}}\]
     if $p \in \left[\dfrac{\kappa \log n}{n}, 1\right)$, where $\kappa, \xi \in {\rm I\!R}^+$ are a universal constants.
\end{lemma}
\begin{proof}
    A similar calculation to the one done in Lemma \ref{lem:four}, we can easily show that, $\forall i_1 \in [r]$,
    \begin{align*}
    \sum\limits_{i \in \Lambda_{i_1}, j \in (\mathcal{V} \setminus \mathcal{S}) \setminus \Lambda_{i_1}} B_{ij}\inprod{X_{i},X_{j}} 
    &\leq \norm{B}\sqrt{\sum\limits_{i \in \Lambda_{i_1}} \norm{X_i}^2}\sqrt{\sum\limits_{i \in (\mathcal{V} \setminus \mathcal{S}) \setminus \Lambda_{i_1}} \norm{X_i}^2} \\&\leq \norm{B}\sqrt{\sum\limits_{i \in \Lambda_{i_1}} \norm{X_i}^2}\sqrt{\sum\limits_{i \in \mathcal{V} \setminus \mathcal{S}} \norm{X_i}^2}.
    \end{align*}
    $\forall i_2 \in [t]$,
    \begin{align*}
    \sum\limits_{i \in \Pi_{i_2}, j \in (\mathcal{V} \setminus \mathcal{S}) \setminus \Pi_{i_2}} B_{ij}\inprod{X_{i},X_{j}} 
    &\leq \norm{B}\sqrt{\sum\limits_{i \in \Pi_{i_2}} \norm{X_i}^2}\sqrt{\sum\limits_{i \in (\mathcal{V} \setminus \mathcal{S}) \setminus \Pi_{i_2}} \norm{X_i}^2} \\&\leq \norm{B}\sqrt{\sum\limits_{i \in \Pi_{i_2}} \norm{X_i}^2}\sqrt{\sum\limits_{i \in \mathcal{V} \setminus \mathcal{S}} \norm{X_i}^2}.
    \end{align*}
    And,
    \begin{align*}
    \sum\limits_{i, j \in \Gamma}\inprod{X_{i},X_{j}} \leq \norm{B}\sqrt{\sum\limits_{i \in \Gamma} \norm{X_i}^2}\sqrt{\sum\limits_{i \in \mathcal{V} \setminus \mathcal{S}} \norm{X_i}^2}.
    \end{align*}
    Summing up for $\mathcal{S}, \Lambda_{i_1}'s, \Pi_{i_2}'s \text{ and } \Gamma$,
    \begin{align*}
        \sum\limits_{i, j \in \mathcal{V}} B_{ij}\inprod{X_i,X_j} &\leq 2\norm{B}\paren{\sqrt{\sum\limits_{i \in \mathcal{S}} \norm{X_i}^2} +  \sum_{i_1=1}^{r}\sqrt{\sum\limits_{i \in \Lambda_{i_1}} \norm{X_i}^2} + \sum_{i_2=1}^{t}\sqrt{\sum\limits_{i \in \Pi_{i_2}} \norm{X_i}^2} + \sqrt{\sum\limits_{i \in \Gamma} \norm{X_i}^2}}\\&\qquad\times\sqrt{\sum\limits_{i \in \mathcal{V} \setminus \mathcal{S}} \norm{X_i}^2} \\
        &\leq 2\norm{B}{\sqrt{(r+t+2)\paren{\sum\limits_{i \in \mathcal{S}} \norm{X_i}^2 + \sum_{i_1 = 1}^{r}\sum\limits_{i \in \Lambda_{i_1}} \norm{X_i}^2 + \sum_{i_2=1}^{t}\sum\limits_{i \in \Pi_{i_2}} \norm{X_i}^2 +\sum\limits_{i \in \Gamma} \norm{X_i}^2}}}\\&\qquad\times\sqrt{\sum\limits_{i \in \mathcal{V} \setminus \mathcal{S}} \norm{X_i}^2}\qquad\qquad(\text{by Cauchy-Schwarz Inequality})\\
        &= 2\norm{B}{\sqrt{(r+t+2)k}}\sqrt{k\paren{1-{\rm I\!E}_{i \sim \mathcal{S}} \norm{X_i}^2}}~~(\text{by SDP constraint \ref{eq:sdp2}})\\
        &\leq 2\xi k \sqrt{np} {\sqrt{(r+t+2)}}\sqrt{\paren{1-{\rm I\!E}_{i \sim \mathcal{S}} \norm{X_i}^2}}~~(\text{by Lemma \ref{lem:four_1}}).
    \end{align*} \qed
\end{proof}

\subsubsection{Contribution from $\mathcal{G}\brac{\Lambda_\ell}'s$ i.e. expander graphs.}
We bound the SDP terms by using an important result from Bhaskara \etal\cite{10.1145/1806689.1806719}.

\begin{lemma}[\cite{10.1145/1806689.1806719}, Theorem 6.1]
    \label{lem:eight}
    For a $(d', \lambda)$-expander graph on $n$ vertices, the value of the SDP \ref{sdp:dks} is at most $\dfrac{k^2d'}{n} + k\lambda.$
\end{lemma}

\begin{lemma}
    \label{lem:nine}
    $
        \sum\limits_{\ell=1}^{r}\sum\limits_{i, j \in \Lambda_{\ell}} A_{ij} \inprod{X_i, X_j} \leq k^2\paren{\dfrac{d}{s}+\dfrac{\lambda}{k}}\paren{1-{\rm I\!E}_{i \sim \mathcal{S}} \norm{X_i}^2}.
    $
\end{lemma}
\begin{proof}
    Summing up for all $\ell \in [r]$ and using Lemma \ref{lem:eight} for each sum and a scaling factor of $\sum\limits_{i \in \Lambda_\ell} \norm{X_i}^2/k$, we get,
    \begin{align*}
        \sum_{\ell=1}^{r}\sum_{i, j \in \Lambda_{\ell}} A_{ij} \inprod{X_i, X_j} \leq \sum_{\ell=1}^{r}\paren{\dfrac{kd}{s}+\lambda}\paren{\sum_{i \in \Lambda_\ell} \norm{X_i}^2} \leq
        \paren{\dfrac{kd}{s}+\lambda}\paren{\sum_{i \in \mathcal{V} \setminus \mathcal{S}} \norm{X_i}^2} &\\= k^2\paren{\dfrac{d}{s}+\dfrac{\lambda}{k}}\paren{1-{\rm I\!E}_{i \sim \mathcal{S}} \norm{X_i}^2}
    \end{align*}
    where we used SDP constraint \ref{eq:sdp2} in the third step.\qed
\end{proof}

\paragraph{Contribution from $\mathcal{G}\brac{\Pi_\ell}'s$ i.e. graphs with low average degree.}

We restate the assumption on $\mathcal{G}\brac{\Pi_{\ell}}~\forall {\ell} \in [t]$ in our model for clarity,
$\max\limits_{\mathcal{W} \subseteq \Pi_\ell}\left\{\dfrac{\sum\limits_{i, j \in \mathcal{W}}A_{ij}}{2\abs{\mathcal{W}}}\right\} \leq \gamma k.$
Given a graph $\mathcal{H} = (\mathcal{V'}, \mathcal{E'})$, consider the following LP relaxation for the problem of computing $\max\limits_{\mathcal{V''} \subseteq \mathcal{V'}} \sum\limits_{i, j \in \mathcal{V''}}C_{ij}/\Abs{\mathcal{V''}}$ where $C$ is the adjacency matrix of graph $\mathcal{H}$.
\begin{LP}
    \label{lp:lp}
    \begin{align}
    \max_{\mathcal{V''} \subseteq \mathcal{V'}}\qquad\qquad\qquad\qquad\quad
    \label{eq:lp0}
    \sum\limits_{\set{i,j} \in \mathcal{E'}} &C_{ij}x_{ij} \\
    \text{subject to}\qquad\qquad\qquad\qquad\qquad\qquad
    \label{eq:lp1}
    &x_{i j} \leq y_{i} & \forall \set{i, j} \in \mathcal{E'} \\
    \label{eq:lp3}
    &\sum_{i \in \mathcal{V'}} y_{i} \leq 1 \\
    \label{eq:lp4}
    &x_{i j} \geq 0 & \forall \set{i, j} \in \mathcal{E'} \\
    \label{eq:lp5}
    &y_{i}  \geq 0 & \forall i \in \mathcal{V'}
    \end{align}
\end{LP}
\noindent
Charikar \cite{10.1007/3-540-44436-X_10} proved the following result.
\begin{theorem}
[\cite{10.1007/3-540-44436-X_10}, Theorem 1]
    \label{lem:charikar_lp}
    For a given graph $H = (\mathcal{V'}, \mathcal{E'})$ with adjacency matrix $C$,
    $\max\limits_{\mathcal{V''} \subseteq \mathcal{V'}} \dfrac{\sum\limits_{i, j \in \mathcal{V''}}C_{ij}}{|\mathcal{V''}|} = \OPT\paren{LP}$
    where $\OPT\paren{LP}$ denotes the optimal value of the LP \ref{lp:lp}.
\end{theorem}
\noindent
Note that both the constraints of LP \ref{lp:lp} and SDP \ref{sdp:dks} closely resemble each other and hence after an appropriate scaling, we can construct a feasible solution to LP \ref{lp:lp} using our SDP solution.
\begin{lemma}
    \label{lem:lpfeasibility}
    For $\mathcal{G}\brac{\Pi_\ell}$ $\forall \ell \in [t]$,
    $
    x_{ij} \defeq \dfrac{\inprod{X_i, X_j}}{\sum\limits_{i \in \Pi_\ell} \norm{X_i}^2}
    \text{ for } \set{i, j} \in \mathcal{E}\paren{\mathcal{G}\brac{\Pi_\ell}} ~\text{and}~
    y_{i} \defeq \dfrac{\norm{X_i}^2}{\sum\limits_{i \in \Pi_\ell} \norm{X_i}^2}
    \text{ for } i \in \Pi_\ell
    $
    is a feasible solution for LP \ref{lp:lp}.
\end{lemma}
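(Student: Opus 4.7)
The plan is to verify each of the four constraints of \prettyref{lp:lp} directly from the SDP constraints, after addressing a minor degeneracy. First I would assume without loss of generality that $\sum_{i \in \Pi_\ell} \norm{X_i}^2 > 0$ (in the degenerate case, non-negativity via Cauchy--Schwarz forces $\inprod{X_i,X_j} = 0$ for all $i,j \in \Pi_\ell$, and setting all $x_{ij}$ and $y_i$ to $0$ trivially satisfies the LP after a convention on the scaling). This lets us scale freely by the denominator.

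Next I would check the two non-negativity constraints \prettyref{eq:lp4} and \prettyref{eq:lp5}. Non-negativity of $y_i$ is immediate since $\norm{X_i}^2 \geq 0$ and the denominator is positive. Non-negativity of $x_{ij}$ is a direct consequence of SDP constraint \prettyref{eq:sdp5}, which enforces $\inprod{X_i, X_j} \geq 0$ for all $i, j \in [n]$.

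Then I would verify the degree-type constraint \prettyref{eq:lp1}: $x_{ij} \leq y_i$. This again uses \prettyref{eq:sdp5}, which gives $\inprod{X_i, X_j} \leq \inprod{X_i, X_i} = \norm{X_i}^2$ for every $i \neq j$. Dividing both sides by the positive quantity $\sum_{i \in \Pi_\ell} \norm{X_i}^2$ produces exactly $x_{ij} \leq y_i$.

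Finally, the normalization constraint \prettyref{eq:lp3} holds with equality by construction:
\[
\sum_{i \in \Pi_\ell} y_i \;=\; \sum_{i \in \Pi_\ell} \frac{\norm{X_i}^2}{\sum_{j \in \Pi_\ell} \norm{X_j}^2} \;=\; 1 \,.
\]
There is no serious obstacle here; the statement is really just a bookkeeping check that the SDP constraints used in the definitions of $x_{ij}$ and $y_i$ mirror the LP constraints after the chosen normalization. The only subtlety is to flag the degenerate case where $\sum_{i \in \Pi_\ell} \norm{X_i}^2 = 0$, so that the division in the definition of $x_{ij}, y_i$ is well posed.
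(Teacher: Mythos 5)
Your proposal is correct and follows essentially the same route as the paper: constraints \prettyref{eq:lp1}, \prettyref{eq:lp4}, and \prettyref{eq:lp5} are read off directly from the SDP constraint \prettyref{eq:sdp5}, and \prettyref{eq:lp3} is verified by the normalization telescoping to $1$. Your additional remark about the degenerate case $\sum_{i \in \Pi_\ell}\norm{X_i}^2 = 0$ is a reasonable extra bit of care that the paper glosses over, but it does not change the substance of the argument.
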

\begin{proof}
    \begin{enumerate}
        \item  The LP constraints \ref{eq:lp1} and the non-negativity constraints \ref{eq:lp4} and \ref{eq:lp5} hold by the SDP constraint \ref{eq:sdp5}.
        \item For LP constraint \ref{eq:lp3}, 
        \\$\sum\limits_{i \in \Pi_\ell} y_i = \sum\limits_{i \in \Pi_\ell} \left(\dfrac{\norm{X_i}^2}{\sum\limits_{i \in \Pi_\ell} \norm{X_i}^2} \right)= \dfrac{\sum\limits_{i \in \Pi_\ell} \norm{X_i}^2}{\sum\limits_{i \in \Pi_\ell} \norm{X_i}^2} = 1$
    \end{enumerate}\qed
\end{proof}
\begin{lemma}
    \label{lem:lpfeasibility2}
	For $\mathcal{G}\brac{\Pi_\ell}$ $\forall \ell \in [t]$,
	$ \sum\limits_{i, j \in \Pi_\ell} A_{ij}\inprod{X_i,X_j} \leq 2\left(\gamma k\right) \sum\limits_{i \in \Pi_\ell} \norm{X_i}^2 .$
\end{lemma}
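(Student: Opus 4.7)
The plan is to reduce the claim to an instance of the densest subgraph LP and then invoke Charikar's exact characterization via \prettyref{lem:charikar_lp} together with the density hypothesis on $\mathcal{G}\brac{\Pi_\ell}$. The SDP solution already supplies, essentially for free, a feasible LP point by \prettyref{lem:lpfeasibility}: the scaled quantities $x_{ij} = \inprod{X_i,X_j}/\sum_{i \in \Pi_\ell}\norm{X_i}^2$ and $y_i = \norm{X_i}^2/\sum_{i \in \Pi_\ell}\norm{X_i}^2$ satisfy the constraints of \prettyref{lp:lp} applied to the weighted graph $\mathcal{G}\brac{\Pi_\ell}$.

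Next, I would compute the LP objective on this feasible point. Since the LP objective $\sum_{\{i,j\} \in E(\mathcal{G}\brac{\Pi_\ell})} A_{ij}\, x_{ij}$ ranges over unordered pairs, this evaluates to
\[
\frac{1}{\sum_{i \in \Pi_\ell} \norm{X_i}^2} \sum_{\{i,j\} \in E(\mathcal{G}\brac{\Pi_\ell})} A_{ij}\, \inprod{X_i,X_j}.
\]
Because the point is feasible, this value is at most $\OPT(LP)$. By \prettyref{lem:charikar_lp}, $\OPT(LP)$ equals the maximum average edge weight $\max_{V'' \subseteq \Pi_\ell}\bigl(\sum_{i,j \in V''} w(\{i,j\})\bigr)/|V''|$, which is precisely twice the quantity bounded by $\gamma k$ in \prettyref{step:six} of our model. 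Thus the LP optimum is at most $\gamma k$ after accounting for the factor of $2$, giving
\[
\sum_{\{i,j\} \in E(\mathcal{G}\brac{\Pi_\ell})} A_{ij}\, \inprod{X_i,X_j} \;\leq\; \gamma k \sum_{i \in \Pi_\ell} \norm{X_i}^2.
\]

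Finally, I would convert the sum over unordered edges into the sum over ordered pairs that appears in the lemma statement: since $A$ is symmetric, $\sum_{i,j \in \Pi_\ell} A_{ij}\inprod{X_i,X_j}$ counts each off-diagonal term twice relative to $\sum_{\{i,j\} \in E}$, producing the target factor $2\gamma k$ on the right-hand side. I do not expect a genuine obstacle here: the only subtlety is keeping the bookkeeping between ordered and unordered pairs consistent between the density hypothesis (which carries a $2|V'|$ in the denominator), the LP objective (which is stated over unordered edges), and the final bound (which is stated over ordered pairs). All three conventions have been fixed earlier in the excerpt, so the factors match exactly once the chain of inequalities is assembled.
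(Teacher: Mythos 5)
Your proposal follows the paper's own proof exactly: take the scaled SDP solution as a feasible point for \prettyref{lp:lp} via \prettyref{lem:lpfeasibility}, evaluate the LP objective on it, upper-bound it by $\OPT(\mathrm{LP})$, identify $\OPT(\mathrm{LP})$ with the maximum subgraph density via \prettyref{lem:charikar_lp}, bound this by the model's density assumption, and finally double to pass from unordered to ordered pairs. The one wrinkle in your write-up is the sentence asserting that Charikar's quantity $\max_{V''}\sum_{i,j\in V''}w(\{i,j\})/|V''|$ is ``precisely twice'' the bounded quantity in \prettyref{step:six}, immediately followed by the conclusion that the LP optimum is ``at most $\gamma k$''---these two statements contradict one another (twice a quantity that is at most $\gamma k$ would be at most $2\gamma k$, not $\gamma k$). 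The correct reading, and the one implicitly used both in the paper's proof and in your final displayed inequality, is that the two quantities are \emph{equal}: in \prettyref{lem:charikar_lp} the sum is over unordered pairs (edges), whereas in the model's assumption it is over ordered pairs, so the $2$ in the denominator cancels and $\OPT(\mathrm{LP})\le\gamma k$ directly. You clearly recognize the ordered/unordered bookkeeping is the only subtlety, and your chain of inequalities lands on the correct final bound $2\gamma k\sum_{i\in\Pi_\ell}\|X_i\|^2$, so this is a cosmetic slip rather than a gap in the argument.
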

\begin{proof}
	By Lemma \ref{lem:lpfeasibility}, we know that any feasible solution to LP \ref{lp:lp} 
	satisfies
	\[ \sum\limits_{\set{i,j} \in \mathcal{E'}} C_{ij}x_{ij} \leq \max_{\mathcal{V}'' \subseteq \mathcal{V'}} \frac{\sum\limits_{i, j \in \mathcal{V}''}C_{ij}}{\Abs{\mathcal{V}''}} . \]
	\begin{align*}
	\therefore\sum\limits_{i, j \in \Pi_\ell} A_{ij}\inprod{X_i, X_j} &= 2\left(\sum\limits_{\set{i, j} \in \mathcal{E}\paren{\mathcal{G}\brac{\Pi_\ell}}} \dfrac{A_{ij}\inprod{X_i, X_j}}{\sum\limits_{i \in \Pi_\ell} \norm{X_i}^2}\right) \sum\limits_{i \in \Pi_\ell} \norm{X_i}^2 \\
	& \leq 2\left(\max_{\mathcal{W} \subseteq \Pi_\ell} \frac{\sum\limits_{i, j \in \mathcal{W}}C_{ij}}{\abs{\mathcal{W}}}\right) \sum\limits_{i \in \Pi_\ell} \norm{X_i}^2 
	\leq 2\left(\gamma k\right) \sum\limits_{i \in \Pi_\ell} \norm{X_i}^2 .
	\end{align*}\qed
\end{proof}
\begin{lemma}
	\label{lem:seven}
	$
	\sum\limits_{\ell=1}^{t}\sum\limits_{i, j \in \Pi_{\ell}} A_{ij} \inprod{X_i, X_j} \leq 2\gamma k^2 \paren{1-{\rm I\!E}_{i \sim \mathcal{S}} \norm{X_i}^2}.
	$
\end{lemma}
\begin{proof}
	Summing up for all $\ell \in [t]$ and using Lemma \ref{lem:lpfeasibility2} for each sum, we get,
	\begin{align*}
	       \sum_{\ell=1}^{t}\sum_{i, j \in \Pi_{\ell}} A_{ij} \inprod{X_i, X_j} &\leq 2\gamma k \sum_{i \in \mathcal{V} \setminus \mathcal{S}} \norm{X_i}^2 \leq 2\gamma k^2 \paren{1-{\rm I\!E}_{i \sim \mathcal{S}} \norm{X_i}^2}\\ &\qquad\qquad\qquad\qquad\qquad(\text{by SDP constraint \ref{eq:sdp2}}).
	\end{align*}\qed
\end{proof}
\subsection{Putting things together}
\label{sec:putting}
In this Section, we combine the above bounds.
\begin{proposition}
	With high probability (over the randomness of the input),
	  \[
		{\rm I\!E}_{i \sim \mathcal{S}} \norm{X_i}^2 \geq 1 - \dfrac{4\xi^2(np)(r+t+2)}{k^2\paren{1-6p-2\gamma-\dfrac{d}{s}-\dfrac{\lambda}{k}}^2}
	\]
	if $p \in \left[\dfrac{\kappa \log n}{n}, 1\right)$, where $\kappa, \xi \in {\rm I\!R}^+$ are a universal constants.
\end{proposition}
\begin{proof}
Since our SDP is a maximization relaxation (See SDP \ref{sdp:dks}), we have that, 
\begin{align*}
    k^2 &\leq \sum\limits_{i, j \in \mathcal{V}} A_{ij}\inprod{X_i, X_j}\\
    k^2 &\leq k^2\paren{{\rm I\!E}_{i \sim S} \norm{X_i}^2} + 6pk^2\paren{1-{\rm I\!E}_{i \sim S}\norm{X_i}^2} \\
    &\qquad\qquad+ 2\xi k \sqrt{np} {\sqrt{(r+t+2)}}\sqrt{\paren{1-{\rm I\!E}_{i \sim S} \norm{X_i}^2}}\\ &\qquad\qquad+ k^2\paren{\dfrac{d}{s}+\dfrac{\lambda}{k}}\paren{1-{\rm I\!E}_{i \sim \mathcal{S}} \norm{X_i}^2} \\
    &\qquad\qquad+ 2\gamma k^2 \paren{1-{\rm I\!E}_{i \sim S} \norm{X_i}^2}.
\end{align*}
where we used decomposition of the sum $\sum\limits_{i, j \in \mathcal{V}} A_{ij}\inprod{X_i, X_j}$, and the results from the Section \ref{sec:bound_sdp}. Rearranging, cancelling terms, and using the fact that the function $x \rightarrow x^2,~\forall x \in {\rm I\!R}^+$  is increasing, we get,
$
    {\rm I\!E}_{i \sim \mathcal{S}} \norm{X_i}^2 \geq 1 - \dfrac{4\xi^2(np)(r+t+2)}{k^2\paren{1-6p-2\gamma-\dfrac{d}{s}-\dfrac{\lambda}{k}}^2}.	
$\qed
\end{proof}
\end{document}